\newtheorem{fact}[thrm]{Fact}
\begin{document}

\title{LANGUAGE CLASSES ASSOCIATED WITH AUTOMATA OVER MATRIX GROUPS}

\thanks{This work was supported by Bo\u{g}azi\c{c}i University Research Fund under grant number 11760. A preliminary version of this work was presented at the 8'th Workshop on Non-Classical Models of Automata and Applications (NCMA), Debrecen, Hungary, August 29-30, 2016.} 
\thanks{\"{O}zlem Salehi was partially
supported by T\"{U}B\.{I}TAK (Scientific and Technological Research Council of Turkey).}
\thanks{The research of F. D'Alessandro was supported by a EC-FP7 Marie Curie-T\"UB\.{I}TAK Co-Funded Brain Circulation Scheme Project 2236 Fellowship.}
\author{\"{O}zlem Saleh\.{i}}
\address{Bo\u{g}azi\c{c}i University, Department of Computer Engineering, Bebek 34342 \.{I}stanbul, Turkey \email{ozlem.salehi@boun.edu.tr \& say@boun.edu.tr}}
\author{Flavio D'Alessandro}
\address{Bo\u gazi\c ci University, 
	Department of Mathematics, Bebek 34342, \.{I}stanbul, Turkey
		}
	\secondaddress{Universit\`a di Roma ``La Sapienza'', Dipartimento di Matematica, Piazzale Aldo Moro 2, 00185 Roma, Italy \email{dalessan@mat.uniroma1.it }}

\author{A. C. Cem Say}
\sameaddress{1}

\date{...}
\subjclass{68Q45, 68Q05}
\keywords{group automata, time complexity}

\begin{abstract} 
We investigate the language classes recognized by group automata over
matrix groups. For the case of $2 \times 2 $ matrices, we prove that
the corresponding group automata for rational matrix groups are more
powerful than the corresponding group automata for integer matrix
groups. Finite automata over some special matrix groups, such as the
discrete Heisenberg group and the Baumslag-Solitar group are also
examined.  We also introduce the notion of time complexity for group
automata and demonstrate some separations among related classes. The
case of linear-time bounds is examined in detail throughout our
repertory of matrix group automata.
\end{abstract}

\maketitle
\section{Introduction}

Many extensions of the classical finite automaton model have been
examined. One such variant is the group automaton (finite automaton
over groups), which is a nondeterministic finite automaton equipped
with a register that holds an element from a group \cite{MS97}. The
register is initialized to the identity element of the group, and a
computation is deemed successful if the register is equal to the
identity element at the end of the computation after being multiplied
at every step. This setup generalizes various models such as
nondeterministic blind multicounter automata \cite{FMR67} and finite
automata with multiplication \cite{ISK76}.

The theory of group automata has been essentially developed in the
case of free groups \cite{DM00,Co05,Ka09}, and in the case of free
Abelian groups \cite{EO04,EKO08}, where strong theorems allow to
characterize the power of such models and the combinatorial properties
of the languages recognized by these automata. For groups that are not of the types mentioned above, even in the case
of groups of matrices of low dimension, the study of group automata quickly beomes nontrivial, and there are remarkable classes of linear
groups for which little is known about the automaton models that they
define.

In this paper, we present several new results about the classes of
languages recognized by finite automata over matrix groups. We focus
on matrix groups with integer and rational entries. For the case of $
2 \times 2 $ matrices, we prove that the corresponding group automata
for rational matrix groups are more powerful than the corresponding
group automata for integer matrix groups. We also explore finite
automata over some special matrix groups, such as the discrete
Heisenberg group and the Baumslag-Solitar group. The ``zoo" of
language classes associated with different groups is presented,
visualizing known relationships and open problems.

We also introduce the notion of time complexity for group automata,
and use this additional dimension to analyze the relationships among
the language families of various automata using different groups. We
develop a method for proving that automata over matrix groups where the growth rate of the group and the time are bounded can not recognize certain languages, even if one uses a very weak definition of time-bounded computation, and use this to demonstrate
some new relationships between time-bounded versions of our language
classes. The case of linear-time bounds is examined in detail
throughout our repertory of matrix groups.

\section{Preliminaries} \label{Section: pre}

\subsection{Notation and terminology}

The following notation will be used throughout the paper: $Q$ is the set of
states, $q_0 \in Q$ denotes the initial state, $Q_a \subseteq Q$ denotes the
set of accepting states, and $\Sigma$ is the input alphabet.

By $ w^r $, we represent the 
reverse of the string $ w $. The length of $ w $ is denoted by $ |w| $. 

$ \mathsf{REG} $, $ \mathsf{CF} $, and $ \mathsf{RE} $ denote the families of 
regular languages, context-free languages, and recursively enumerable languages,
respectively.

We assume a familiarity with some basic notions from algebra and group theory 
(see {\cite{Fr03},\cite{LS77}} for references on this topic). For a finitely generated 
group $ G $ and a set $ X $ of generators, the word problem language of $ G $ is 
the language $ W(G,X) $ over $ \Sigma =\{X \cup X^{-1}\} $ which consists of all words that represent the 
identity element of $ G $. Most of the time, the statements about the word problem 
are independent of the generating set and in these cases the word problem 
language is denoted by $ W(G) $. For a string $ w=w_1w_2\dots w_n \in W(G)  $, $ w^{-1}=w_n^{-1}\dots w_1^{-1} $ where each $ w_i \in \Sigma$ represents a generator.

\subsection{Group automata}

Group automata first appear explicitly in the paper \cite{MS97} under the name of extended finite automaton. 
The definition is formally given as follows.

Let $ K = (M,\circ, e )$ be a group under the operation denoted by 
$ \circ $ with the neutral element denoted by $ e $. 
An \textit{extended finite automaton} over the group $ K = (M,\circ, e)$ is a
6-tuple
\[ \mathcal{F} = (Q, \Sigma,K,\delta, q_0,Q_a), \]
where  the transition function $\delta$ is defined as
\[\delta: Q \times (\Sigma \cup \{\varepsilon\}) \rightarrow \mathbb{P}(Q\times M).\] 
$ \delta(q,\sigma) \ni (q',m) $ means that when $\mathcal{F}$ reads the
symbol (or empty string) $\sigma \in \Sigma \cup \{\varepsilon\}$
in state $q$, it can move to state $q'$, and write $ x\circ m $ in the register, 
where $ x $ is the old content of the register. The initial value of the register is the 
neutral element $ e $ of the group $ K $. The string is accepted if, after
completely reading the string,  $\mathcal{F}$ enters an accept state with the 
content of the register being equal to the neutral element of $ K $.

We will prefer using the name group automaton ($ G $-automaton) instead of 
extended finite automaton over group $ G $.

\textit{Monoid automata} are defined analogously where the group $ G $ is replaced by some monoid $ N $.

The class of languages recognized by $ G $-automata will be denoted as 
$ \mathfrak{L}(G) $.

\section{Matrix groups and associated language classes}
\label{Section: results}

In this section, we are going to prove some new results about the
classes of languages recognized by finite automata over various groups,
focusing on linear groups.

\subsection{Basic results}

We will denote the \textit{free group} over $r$ generators by $ \mathbf{F}_r $.
Note that $ \mathbf{F}_0 $ is the trivial group, and $ \mathbf{F}_1 $
is isomorphic to $ \mathbb{Z} $, the additive group of integers. The
class of regular languages is characterized as the set of languages
recognized by finite automata over the trivial group $\mathbf{F}_0 $
in  \cite{DM00}.

The relation between the classes of languages recognized by free group
automata is summarized as follows.

\begin{fact}\label{fact: reg} \textup{\cite{DM00}}		
	$ \mathsf{REG} = \mathfrak{L}(\mathbf{F}_0) \subsetneq
	\mathfrak{L}(\mathbf{F}_1) = \mathfrak{L}(\mathbb{Z}) \subsetneq
	\mathfrak{L}(\mathbf{F}_2) $.
\end{fact}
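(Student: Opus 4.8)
The plan is to verify the chain in four steps: the two equalities, the two inclusions, and the two strict separations. First, for $\mathsf{REG}=\mathfrak{L}(\mathbf{F}_0)$: since $\mathbf{F}_0$ is the trivial group, every transition of an $\mathbf{F}_0$-automaton writes its only element $e$ into the register, so the register is permanently $e$ and the acceptance test ``register equals $e$'' is vacuous; hence an $\mathbf{F}_0$-automaton is just a nondeterministic finite automaton with $\varepsilon$-moves, and conversely, giving the equality. For $\mathfrak{L}(\mathbf{F}_1)=\mathfrak{L}(\mathbb{Z})$ I would invoke the general observation that isomorphic groups define the same class: an isomorphism $\varphi\colon G\to H$ turns a $G$-automaton into an $H$-automaton by replacing every register element $m$ by $\varphi(m)$, and the inverse isomorphism reverses this; apply it to $\mathbf{F}_1\cong\mathbb{Z}$ ($a\mapsto 1$). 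The two inclusions $\mathfrak{L}(\mathbf{F}_0)\subseteq\mathfrak{L}(\mathbf{F}_1)$ and $\mathfrak{L}(\mathbf{F}_1)\subseteq\mathfrak{L}(\mathbf{F}_2)$ both follow from the monotonicity principle ``$H\le G$ implies $\mathfrak{L}(H)\subseteq\mathfrak{L}(G)$'': reading an $H$-automaton as a $G$-automaton changes nothing, because along any computation the register stays inside $H$ and reaches $e_G$ exactly when it reaches $e_H$; here one uses $\mathbf{F}_0\le\mathbf{F}_1$ and $\mathbf{F}_1=\langle a\rangle\le\langle a,b\rangle=\mathbf{F}_2$.

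For the first separation, $\mathsf{REG}\subsetneq\mathfrak{L}(\mathbb{Z})$, I would exhibit $L_1=\{a^nb^n:n\ge 0\}$: a $\mathbb{Z}$-automaton recognizes it by using its control states to enforce the shape $a^*b^*$ while the register accumulates $+1$ for each $a$ and $-1$ for each $b$, accepting iff the register is $0$ once the input is exhausted; and $L_1\notin\mathsf{REG}$ by the pumping lemma.

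The core of the argument is the second separation, $\mathfrak{L}(\mathbb{Z})\subsetneq\mathfrak{L}(\mathbf{F}_2)$. I would take a genuinely ``stack-like'' context-free language, for instance $L_2=\{a^mb^nc^nd^m:m,n\ge 0\}$, which up to relabelling equals $W(\mathbf{F}_2)\cap a^*b^*(b^{-1})^*(a^{-1})^*$ (so $W(\mathbf{F}_2)$ itself also witnesses the separation, using that $\mathfrak{L}(\mathbb{Z})$ is closed under intersection with regular languages). Membership $L_2\in\mathfrak{L}(\mathbf{F}_2)$ is straightforward: build the obvious pushdown-style $\mathbf{F}_2$-automaton --- multiply the register by $a$ on each $a$, by $b$ on each $b$, by $b^{-1}$ on each $c$, by $a^{-1}$ on each $d$, with control states fixing the block order, and accept with empty register --- or simply invoke the known inclusion $\mathsf{CF}\subseteq\mathfrak{L}(\mathbf{F}_2)$. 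The obstacle is $L_2\notin\mathfrak{L}(\mathbb{Z})$, i.e.\ that a single blind integer counter cannot recognize $L_2$. My approach: by closure under intersection with regular languages it suffices to analyze a hypothetical $\mathbb{Z}$-automaton restricted to inputs of the shape $a^mb^nc^pd^q$; pumping cycles inside each of the four blocks shows that, for large $m,n,p,q$, the final register value along an accepting run is an affine expression $\alpha m+\beta n+\gamma p+\delta q+(\text{bounded term})$ determined by one of finitely many ``modes'', and acceptance forces it to vanish. Hence the accepted set of tuples $(m,n,p,q)$ would be a finite union of thickened integer hyperplanes, whereas $\{(m,n,p,q):m=q,\ n=p\}$ has codimension $2$ and is not a finite union of codimension-$1$ sets --- a contradiction. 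Formalizing this ``one blind counter cannot enforce two independent equalities'' statement --- carefully handling nondeterminism and $\varepsilon$-moves in the mode decomposition --- is where the real work lies; everything else is routine.
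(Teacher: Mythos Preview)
The paper does not prove this statement at all: it is recorded as a \emph{Fact} with a citation to \cite{DM00} and no proof environment follows. So there is no ``paper's own proof'' to compare against; you are supplying an argument where the paper simply imports the result.

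As for the content of your proposal: the treatment of $\mathsf{REG}=\mathfrak{L}(\mathbf{F}_0)$, of $\mathfrak{L}(\mathbf{F}_1)=\mathfrak{L}(\mathbb{Z})$ via isomorphism, of the subgroup-monotonicity inclusions, and of the first strict inclusion via $\{a^nb^n:n\ge 0\}$ is correct and standard. The only nontrivial part is the second strict inclusion, and here your plan is sound in spirit but, as you yourself flag, the ``affine hyperplane / one blind counter cannot enforce two independent equalities'' step for $L_2=\{a^mb^nc^nd^m\}$ is left as a sketch whose formalization (handling nondeterminism and $\varepsilon$-loops) is genuinely delicate. If you want a route that avoids that work entirely and stays within results the paper already cites, take the witness $\{a^nb^n:n\ge 0\}^{*}$: it is context-free, hence in $\mathfrak{L}(\mathbf{F}_2)$ by Fact~\ref{fact: cf}, and by Greibach's result \cite{Gr78} (invoked in the proof of Fact~\ref{fact: cfzn}) it is not recognized by any $\mathbb{Z}^k$-automaton, in particular not by any $\mathbb{Z}$-automaton. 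This gives $\mathfrak{L}(\mathbb{Z})\subsetneq\mathfrak{L}(\mathbf{F}_2)$ with no new lemma needed.
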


A characterization of context-free languages by group automata was
first stated by Dassow and Mitrana \cite{DM00}, and proven in \cite{Co05}. 
Let us note that $ \mathbf{F}_2 $ contains any free group
of rank $ n \geq 2 $ \cite{LS77}.

\begin{fact}\textup{\cite{DM00, Co05, Ka09}}\label{fact: cf}
	$\mathfrak{L}(\mathbf{F}_2)$ is the family of context-free languages.
\end{fact}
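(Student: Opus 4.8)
The plan is to prove the two inclusions $\mathfrak{L}(\mathbf{F}_2) \subseteq \mathsf{CF}$ and $\mathsf{CF} \subseteq \mathfrak{L}(\mathbf{F}_2)$ separately, exploiting the elementary dictionary between a free-group register and a pushdown store: an element of $\mathbf{F}_2 = \langle a, b\rangle$ is a reduced word over $\{a^{\pm 1}, b^{\pm 1}\}$, and multiplying it on the right by a generator either lengthens it by one letter or cancels its last letter, which is exactly what a stack does on a push or a pop.

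For $\mathfrak{L}(\mathbf{F}_2) \subseteq \mathsf{CF}$, given an $\mathbf{F}_2$-automaton $\mathcal{F}$ I would build a pushdown automaton $P$ that keeps on its stack (above a bottom marker) the reduced word currently held in $\mathcal{F}$'s register, with the top of the stack being the last letter of that word. Each transition of $\mathcal{F}$ that multiplies the register by a reduced word $m = m_1 \cdots m_\ell$ is simulated by $P$ in $\ell$ steps through auxiliary states, where the step for $m_i$ inspects the top symbol and either pops it (if it equals $m_i^{-1}$) or pushes $m_i$; the input symbol, if any, is read on the first of these steps, and $\varepsilon$-transitions of $\mathcal{F}$ become $\varepsilon$-transitions of $P$. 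One checks by induction that after each simulated transition the stack content is the reduced form of the register, so $P$ accepts (by final state and empty stack) exactly when $\mathcal{F}$ does. This direction is routine.

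For the reverse inclusion I would start from a pushdown automaton $P$ for a context-free language $L$, put it in a convenient normal form (for instance a Greibach-type form in which every move genuinely pops the top symbol and acceptance is by empty stack after the whole input has been read), encode its stack alphabet $\Gamma = \{\gamma_1, \ldots, \gamma_k\}$ by the reduced words $u_i = a^i b a^{-i} \in \mathbf{F}_2$, and let the $\mathbf{F}_2$-automaton $\mathcal{F}$ first write $u_S$ (the encoding of $P$'s initial stack) into its register and then simulate $P$ step by step, multiplying the register by $u_i$ when $P$ pushes $\gamma_i$ and by $u_i^{-1}$ when $P$ pops $\gamma_i$. Since the $u_i$ are Nielsen-reduced they freely generate their subgroup of $\mathbf{F}_2$, so the register of $\mathcal{F}$ stays inside that free subgroup and can be read as a reduced word over $\{u_i^{\pm 1}\}$; for a faithful run of $P$ this word is precisely the encoding of the current stack, and in particular it equals the identity exactly when the stack is empty. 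The crux---and the step I expect to be the main obstacle---is ruling out spurious accepting runs of $\mathcal{F}$ that correspond to no run of $P$ (because $\mathcal{F}$, unlike $P$, cannot see which symbol is on top before it ``pops''); this amounts to showing that the word over $\{u_i^{\pm 1}\}$ recorded along any computation of $\mathcal{F}$ freely reduces to the empty word if and only if the underlying sequence of moves is a legal accepting computation of $P$, which is the usual Dyck-style bracket-matching argument. An alternative route that sidesteps this bookkeeping is to invoke the Chomsky--Schützenberger theorem to write $L = h(W(\mathbf{F}_2) \cap R)$ for a regular language $R$ and a homomorphism $h$ (using that $W(\mathbf{F}_2)$ is the two-sided Dyck language on two bracket pairs), and then to note that $W(\mathbf{F}_2) \in \mathfrak{L}(\mathbf{F}_2)$ via a one-state automaton while $\mathfrak{L}(\mathbf{F}_2)$ is closed under intersection with regular languages (a product construction that leaves the register untouched) and under homomorphic images ($\varepsilon$-moves).
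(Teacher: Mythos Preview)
The paper records this statement as a cited Fact and gives no proof of its own at that point, so there is no in-paper argument to compare your sketch against directly. However, later (in establishing the linear-time refinement, Theorem~\ref{thm: rtF2}) the paper does reproduce Kambites' construction for the inclusion $\mathsf{CF}\subseteq\mathfrak{L}(\mathbf{F}_2)$, and that is worth contrasting with what you wrote.

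Your direction $\mathfrak{L}(\mathbf{F}_2)\subseteq\mathsf{CF}$ is fine. For the converse, your first route has a real gap exactly where you flag it, and the ``usual Dyck-style bracket-matching argument'' does \emph{not} close it. A word over $\{u_i^{\pm1}\}$ being trivial in the free subgroup $\langle u_1,\dots,u_k\rangle$ is strictly weaker than being a one-sided Dyck word: for a PDA with initial stack symbol $S$ and transitions ``pop $A$, push $C$'', ``pop $C$, push $A$'', and ``pop $S$'', the product $u_S\cdot(u_A^{-1}u_C)\cdot(u_C^{-1}u_A)\cdot u_S^{-1}$ is the identity in the free group, yet the first move pops $A$ from a stack whose top is $S$. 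If no legal run of $P$ accepts the corresponding input, your $\mathbf{F}_2$-automaton over-accepts. This is precisely why Kambites' construction (the one the paper uses) passes to $\mathbf{F}_{n+1}$ with an \emph{extra} generator $\#$, inserting a $\#$ after each push and imposing the ``permissible padding'' discipline so that any identity-evaluating run is forced to be a one-sided Dyck word in the original stack alphabet.

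Your Chomsky--Sch\"utzenberger alternative is a legitimate route, and once closure of $\mathfrak{L}(\mathbf{F}_2)$ under regular intersection and homomorphic image is noted it is very clean. The one caveat is that the classical statement gives $L=h(D_k\cap R)$ with the \emph{one-sided} Dyck language $D_k$, not $W(\mathbf{F}_k)$; the two-sided version you invoke is also known, but its proof already confronts the same blind-pop obstacle. If you are prepared to cite that variant, your argument is complete and genuinely different from (and more structural than) Kambites'; if you only have the one-sided version in hand, you have relocated rather than removed the difficulty.
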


We will denote by $ \mathbb{Z}^k $ the additive group of integer
vectors of dimension $k$. This group is
isomorphic to the \textit{free Abelian group of rank $ k$}, and  $
\mathbb{Z}^k $-automata are equivalent to nondeterministic blind
$k$-counter automata \cite{Gr78}.

The following result states the hierarchy between the classes of languages 
recognized by $ \mathbb{Z}^k $-automata. This result also follows from the
hierarchy between the class of languages recognized by nondeterministic 
blind $ k $-counter automata.

\begin{fact}\textup{\cite{CEO06}}
	$ \mathfrak{L}(\mathbb{Z}^k) \subsetneq
	\mathfrak{L}(\mathbb{Z}^{k+1})$ for $ k \geq 1 $.
\end{fact}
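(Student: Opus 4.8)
The plan is to treat the inclusion and the strictness separately. For $\mathfrak{L}(\mathbb{Z}^k)\subseteq\mathfrak{L}(\mathbb{Z}^{k+1})$ I would use the embedding of $\mathbb{Z}^k$ into $\mathbb{Z}^{k+1}$ as the subgroup of vectors with last coordinate $0$: given a $\mathbb{Z}^k$-automaton, build a $\mathbb{Z}^{k+1}$-automaton with the same states and transitions, except that a register instruction $m\in\mathbb{Z}^k$ is replaced by $(m,0)\in\mathbb{Z}^{k+1}$. At every step the new register equals the old one with a $0$ appended, so the two automata accept the same language. (This is the general fact that $H\leq G$ implies $\mathfrak{L}(H)\subseteq\mathfrak{L}(G)$.)

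For the strict part I would exhibit a witness in $\mathfrak{L}(\mathbb{Z}^{k+1})\setminus\mathfrak{L}(\mathbb{Z}^k)$, and a convenient one is the bounded language
\[ L_{k+1}=\{\,a_1^{n}a_2^{n}\cdots a_{k+2}^{n}\;:\;n\geq 0\,\}. \]
It lies in $\mathfrak{L}(\mathbb{Z}^{k+1})$: the finite control checks membership in $a_1^{*}a_2^{*}\cdots a_{k+2}^{*}$, while the $j$-th register component ($1\leq j\leq k+1$) is incremented on every $a_j$ and decremented on every $a_{j+1}$, so that on input $a_1^{n_1}\cdots a_{k+2}^{n_{k+2}}$ the final register is $(n_1-n_2,\dots,n_{k+1}-n_{k+2})$, which vanishes exactly when all block lengths agree. (Via the stated identification of $\mathbb{Z}^{k+1}$-automata with nondeterministic blind $(k+1)$-counter automata, this is the obvious counter construction.)

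The heart of the matter is the lower bound $L_{k+1}\notin\mathfrak{L}(\mathbb{Z}^k)$, which I would prove by a block-wise pumping argument followed by a dimension count. Assume a $\mathbb{Z}^k$-automaton $\mathcal{M}$ recognizes $L_{k+1}$, fix $n$ large (in terms of $\mathcal{M}$ only) and an accepting run on $a_1^{n}\cdots a_{k+2}^{n}$; since the run accepts, its total register change is $0$. In the $j$-th block, look at the states entered immediately after each of the $n$ letters $a_j$; for $n>|Q|$ two of these coincide, and the sub-run between them is a loop that consumes $d_j\geq 1$ copies of $a_j$ and changes the register by some $v_j\in\mathbb{Z}^k$, with $d_j\leq|Q|$. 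Such a loop can be duplicated, and — since $n$ is large — removed, independently in each block, and these surgeries add up in the register. Hence, writing $d$ for a common multiple of the $d_j$ and $v_j'=(d/d_j)v_j$, for every integer tuple $c=(c_1,\dots,c_{k+2})$ whose entries are bounded by a constant depending only on $\mathcal{M}$, the word $a_1^{n+c_1d}\cdots a_{k+2}^{n+c_{k+2}d}$ admits a run ending in an accepting state with register $\sum_j c_j v_j'$; when this sum is $0$ that run is accepting, so the word lies in $L_{k+1}$, forcing $c_1=\cdots=c_{k+2}$. Now $v_1',\dots,v_{k+2}'$ are $k+2$ vectors in $\mathbb{Z}^k$, so the space of relations $\sum_j c_j v_j'=0$ has dimension at least $2$ and is therefore not contained in $\mathrm{span}\{(1,\dots,1)\}$; picking an integer relation $c$ with not-all-equal entries (and $n$ large enough to absorb its negative entries by removals) yields a word accepted by $\mathcal{M}$ that is not in $L_{k+1}$, a contradiction.

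The main obstacle I foresee is making the block-wise pumping airtight with $\varepsilon$-transitions present: one must be sure each chosen loop really consumes at least one input letter rather than being a pure $\varepsilon$-cycle (the ``states entered immediately after reading $a_j$'' choice takes care of this), that removals remain possible after earlier removals (keeping more than $|Q|$ copies of $a_j$ in each block suffices, which is why only a constant — depending on $\mathcal{M}$ — many removals are ever needed), and that insertions and removals in different blocks are genuinely independent so that the register value of the modified run is exactly $\sum_j c_j v_j'$. Once these bookkeeping points are settled, the linear-algebra step is immediate. Alternatively, one can avoid the ad hoc pumping altogether and deduce the statement from the known strictness of the nondeterministic blind multicounter hierarchy, to which it is equivalent through the identification of $\mathbb{Z}^k$-automata with blind $k$-counter automata recalled above.
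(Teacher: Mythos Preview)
The paper does not prove this statement at all: it is recorded as a Fact with a citation to \cite{CEO06}, together with the remark that it ``also follows from the hierarchy between the class of languages recognized by nondeterministic blind $k$-counter automata.'' Your closing alternative---invoking that known hierarchy via the identification of $\mathbb{Z}^k$-automata with blind $k$-counter automata---is therefore precisely the paper's route, and is a complete argument.

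Your direct pumping argument, by contrast, has a real gap in the removal step. A single pigeonhole application in block~$j$ produces one loop with parameters $(d_j,v_j)$; that loop can be duplicated any number of times, but it can be excised only once. To realise a negative coefficient $c_j$ with $|c_j|\,(d/d_j)>1$ you would need several disjoint copies of the \emph{same} loop, yet repeated pigeonhole after successive removals may hand you loops with different lengths and different register increments, so the register of the surgically modified run is no longer $\sum_j c_j v_j'$. Your suggested fix (``keeping more than $|Q|$ copies of $a_j$ in each block suffices'') guarantees only that \emph{some} further loop exists, not that it matches $(d_j,v_j)$; and with $\varepsilon$-transitions present the register change of a loop consuming at most $|Q|$ letters is not even a priori bounded, so one cannot pigeonhole on loop types either. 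Nor can you retreat to nonnegative $c$ only: if, for instance, all $v_j'$ lie in an open half-space of $\mathbb{Q}^k$ then the only nonnegative relation $\sum_j c_j v_j'=0$ is $c=0$, and no contradiction arises. Thus the dimension-count step genuinely needs coefficients of both signs, and for that the removal mechanism must be established by a different device (a semilinearity or interchange argument in the style of Greibach) rather than the bookkeeping you sketch.
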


 We denote by $ \mathbb{Q}^+$ the
 multiplicative group of positive rational numbers, which is isomorphic
 to a free Abelian group of infinite rank. A $ \mathbb{Q}^+$-automaton
 is also equivalent to a nondeterministic finite automaton with
 multiplication without equality (1NFAMW) of Ibarra et al.
 \cite{ISK76}.

The following fact characterizes the class of
languages recognized for the case where the alphabet is unary,

\begin{fact}\textup{\cite{ISK76}}\label{fact: unary}
	All \textup{1NFAMW}-recognizable languages over a unary alphabet are regular.
\end{fact}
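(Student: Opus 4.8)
The plan is to reduce a unary $\mathbb{Q}^+$-automaton to a unary $\mathbb{Z}^k$-automaton and then invoke semilinearity. Let $L$ be recognized by a $\mathbb{Q}^+$-automaton $\mathcal{F} = (Q, \{a\}, \mathbb{Q}^+, \delta, q_0, Q_a)$; I want to show $L$ is regular. First I would observe that every computation of $\mathcal{F}$ uses only the finitely many rational multipliers occurring in $\delta$. Let $p_1, \dots, p_k$ be the finitely many primes appearing in the numerators and denominators of these multipliers; writing each such rational uniquely as $p_1^{e_1}\cdots p_k^{e_k}$ identifies the subgroup of $\mathbb{Q}^+$ generated by the multipliers with a subgroup of the free Abelian group $\mathbb{Z}^k$. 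Thus $\mathcal{F}$ may be regarded as a $\mathbb{Z}^k$-automaton whose register holds the exponent vector, whose transitions add integer vectors, and which accepts only with the zero vector in the register. Hence it suffices to prove the claim for unary $\mathbb{Z}^k$-automata.

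Next I would pass to the underlying labelled graph. Regard $\mathcal{F}$ as a finite directed multigraph on vertex set $Q$, each edge carrying a pair $(\sigma, v)$ with $\sigma \in \{a, \varepsilon\}$ and $v \in \mathbb{Z}^k$ the counter increment of the corresponding transition. For a fixed accepting state $q_f \in Q_a$, the set $P_{q_f}$ of finite directed paths from $q_0$ to $q_f$ is a regular language over the finite edge alphabet, being accepted by the graph itself read as a nondeterministic finite automaton. Splitting each increment $v$ into its positive and negative parts $v^+, v^- \in \mathbb{N}^k$, define the homomorphism $\varphi$ from the free monoid on the edge alphabet into the free commutative monoid $\mathbb{N}^{1+2k}$ that sends an edge $e$ to $(\chi(e), v^+, v^-)$, where $\chi(e) = 1$ if $e$ is $a$-labelled and $\chi(e) = 0$ otherwise. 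Then $\varphi(P_{q_f})$ is a rational subset of $\mathbb{N}^{1+2k}$, being the image of a regular language under a monoid homomorphism, and hence is semilinear by Parikh's theorem.

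To conclude, I would use closure properties of semilinear sets. Intersecting $\varphi(P_{q_f})$ with the semilinear set $\{(n,x,y) \in \mathbb{N}^{1+2k} : x = y\}$ retains exactly the triples $(n,x,y)$ that arise from a computation on $a^n$ along a path from $q_0$ to $q_f$ whose total counter increment is $0$, i.e. from an accepting computation through $q_f$. Projecting this semilinear set onto its first coordinate and taking the union over all $q_f \in Q_a$ yields $\{ n \in \mathbb{N} : a^n \in L \}$, which is therefore semilinear, hence a finite union of arithmetic progressions, hence ultimately periodic. Since a unary language is regular if and only if its set of word lengths is ultimately periodic, $L$ is regular.

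The main obstacle is essentially bookkeeping: $\varepsilon$-transitions mean that individual accepting computations may be arbitrarily long, so the argument must be carried out at the level of the regular path language of a finite graph rather than over configurations; and the counter must be carried in the free commutative monoid $\mathbb{N}^{1+2k}$, where Parikh's theorem applies, rather than in $\mathbb{Z}^k$, the zero condition being imposed only afterwards by intersection with a semilinear set. Once this is set up, closure of semilinear sets under intersection and projection finishes the proof. (Alternatively, one may quote that unary blind-multicounter languages coincide with unary reversal-bounded multicounter languages and invoke the corresponding semilinearity result, but the argument above is self-contained modulo Parikh's theorem.)
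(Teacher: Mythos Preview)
The paper does not contain its own proof of this statement: it is recorded as a \emph{Fact} and attributed to \cite{ISK76} without argument. So there is nothing in the paper to compare your proposal against directly.

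That said, your argument is correct and is essentially the standard route to this result. The reduction from $\mathbb{Q}^+$ to $\mathbb{Z}^k$ via the finitely many primes occurring in the transition labels is exactly the observation that every finitely generated subgroup of $\mathbb{Q}^+$ is free Abelian of finite rank, and is the same reduction the paper implicitly relies on when it identifies $\mathbb{Q}^+$-automata with 1NFAMW's and with blind multicounter machines. Your Parikh/semilinearity computation is clean: mapping edges into $\mathbb{N}^{1+2k}$ by $(\chi,v^+,v^-)$, applying Parikh's theorem to the regular path language, intersecting with the linear set $\{x=y\}$, and projecting are all legitimate uses of the closure properties of semilinear sets, and the final identification of semilinear subsets of $\mathbb{N}$ with ultimately periodic sets is standard.

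The cited source \cite{ISK76} in fact proves the stronger statement that every \emph{bounded} language recognized by a 1NFAMW is semilinear (which the paper also invokes elsewhere, e.g.\ in the proofs of Theorems~\ref{theorem: Z2H} and the incomparability results); the unary case you prove is the one-letter specialization of that. Your self-contained argument via Parikh's theorem is a perfectly good substitute for quoting that result.
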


Let us mention that the class of context-free languages and the class
of languages recognized by nondeterministic blind counter automata are
incomparable.

\begin{fact}\label{fact: cfzn}
	$\mathsf{CF}$ and $\mathfrak{L}(\mathbb{Z}^k)$ are incomparable for
	all $ k \geq 2 $.		
\end{fact}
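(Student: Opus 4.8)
The plan is to prove the two non-inclusions $\mathfrak{L}(\mathbb{Z}^k)\not\subseteq\mathsf{CF}$ and $\mathsf{CF}\not\subseteq\mathfrak{L}(\mathbb{Z}^k)$ by exhibiting an explicit witness language for each.

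For the first non-inclusion, the witness is $L_{abc}=\{a^nb^nc^n\mid n\ge 0\}$. First I would describe a $\mathbb{Z}^2$-automaton for it: the finite control verifies that the input has the form $a^*b^*c^*$, while the register, initialized to $(0,0)$, is updated by adding $(1,0)$, $(-1,1)$ and $(0,-1)$ upon reading $a$, $b$ and $c$ respectively. After the entire input has been read the register equals $(n_a-n_b,\,n_b-n_c)$, where $n_x$ denotes the number of occurrences of the symbol $x$, and this is $(0,0)$ precisely when $n_a=n_b=n_c$; hence the automaton recognizes $L_{abc}$. A $\mathbb{Z}^2$-automaton becomes, after padding its register values with $k-2$ trailing zeros, a $\mathbb{Z}^k$-automaton recognizing the same language, so $L_{abc}\in\mathfrak{L}(\mathbb{Z}^k)$ for every $k\ge 2$. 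On the other hand, $L_{abc}\notin\mathsf{CF}$ by the pumping lemma for context-free languages.

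For the second non-inclusion, the witness is the palindrome language $L=\{w\,w^r\mid w\in\{a,b\}^*\}$, which is context-free (it is generated by $S\to aSa\mid bSb\mid\varepsilon$). Suppose, for contradiction, that a $\mathbb{Z}^k$-automaton $\mathcal{F}=(Q,\{a,b\},\mathbb{Z}^k,\delta,q_0,Q_a)$ recognizes $L$. I would invoke the normal-form assumption discussed below: $\mathcal{F}$ performs only a bounded number of $\varepsilon$-moves between two consecutive input symbols (and before the first and after the last one), so that every accepted word of length $m$ admits an accepting computation consisting of $O(m)$ moves. In such a computation, after consuming a prefix of length $n$ the register has been updated $O(n)$ times, so each of its coordinates has absolute value $O(n)$, and the number of configurations --- a control state together with a register value --- that can occur at the moment a length-$n$ prefix has just been consumed is at most $|Q|\cdot O(n)^{k}=O(n^{k})$. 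Choose $n$ so large that $2^{n}$ exceeds this bound. For each of the $2^{n}$ strings $w\in\{a,b\}^n$ the word $w\,w^r$ lies in $L$ and therefore has an accepting computation; let $C_w$ denote its configuration at the instant when it has just finished reading the prefix $w$ together with all the $\varepsilon$-moves that immediately follow, and is about to read the first symbol of $w^r$. By the counting estimate and the pigeonhole principle there are two distinct strings $w,w'\in\{a,b\}^n$ with $C_w=C_{w'}$. Splicing the initial segment of the computation on $w\,w^r$ up to the configuration $C_w$ with the final segment of the computation on $w'\,{w'}^r$ beginning at $C_{w'}=C_w$ produces an accepting computation of $\mathcal{F}$ on the input $w\,{w'}^r$, so that $w\,{w'}^r\in L$. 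But any word in $L$ of length $2n$ has its length-$n$ prefix equal to the reverse of its length-$n$ suffix, which here forces $w=w'$, a contradiction. Hence $L\notin\mathfrak{L}(\mathbb{Z}^k)$, and $\mathsf{CF}\not\subseteq\mathfrak{L}(\mathbb{Z}^k)$.

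The step that I expect to require the most care is the normal-form assumption used in the second part: a priori nothing prevents $\mathcal{F}$ from making unboundedly many $\varepsilon$-transitions between two consecutive input symbols, which would allow the register to reach values not polynomially bounded in the length of the consumed prefix and would thereby break the counting estimate. This is dealt with by a routine but slightly technical normal-form result for $\mathbb{Z}^k$-automata --- equivalently, for nondeterministic blind $k$-counter automata --- asserting that the number of $\varepsilon$-moves occurring between two successive input symbols can always be bounded by a constant depending only on the automaton; the argument proceeds by examining a shortest accepting computation, removing or shortening the $\varepsilon$-loops it contains, and invoking a bound on the size of minimal solutions of the associated systems of linear Diophantine equations. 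Granting this normalization, the two parts together show that $\mathsf{CF}$ and $\mathfrak{L}(\mathbb{Z}^k)$ are incomparable for every $k\ge 2$.
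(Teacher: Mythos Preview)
Your proof is correct, and for the direction $\mathfrak{L}(\mathbb{Z}^k)\not\subseteq\mathsf{CF}$ it coincides with the paper's: both use $\{a^nb^nc^n\mid n\ge 0\}$.

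For the direction $\mathsf{CF}\not\subseteq\mathfrak{L}(\mathbb{Z}^k)$ you take a genuinely different route. The paper uses the witness $\bigl(\{a^nb^n\mid n\ge 0\}\bigr)^*$, which is context-free by closure under Kleene star, and then simply cites Greibach~\cite{Gr78} for the fact that this language is not accepted by any blind $k$-counter machine. You instead use the even-length palindromes $\{ww^r\}$ and carry out an explicit configuration-counting and pigeonhole argument; this is essentially an instance (worked out by hand) of the growth-rate method that the paper later formalizes in Theorem~\ref{thm: growth2}, since $g_{\mathbb{Z}^k}$ is polynomial while $U_{\{ww^r\}}(n)$ is exponential. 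The trade-off is this: the paper's argument is a one-line citation, while yours is more self-contained in its combinatorics but still must import a nontrivial fact from~\cite{Gr78}, namely the quasi-realtime normal form (bounded $\varepsilon$-moves between input symbols) for blind multicounter automata. That normal form is indeed a known theorem of Greibach, so your appeal to it is legitimate; you are right, however, that it is not entirely ``routine'' and is the one place where your argument leans on outside machinery. In short, both proofs ultimately rest on~\cite{Gr78}, but on different results from it.
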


\begin{proof} Consider the language $\mathtt{L}=\{a^nb^n| n\geq 0\}$ which is a
	context-free language. Since context-free languages are closed under star, 
	$\mathtt{L}^*$ is a context-free language whereas it cannot be recognized 
	by any $\mathbb{Z}^k$-automaton for all $ k \geq 1 $ by
	\cite{Gr78}. On the other hand, the non-context-free language 
	$\mathtt{L}'=\{a^nb^nc^n|n \geq 0\}$ can be recognized by a
	$\mathbb{Z}^2$-automaton.
\end{proof}

\subsection{Automata on groups of $ 2\times2 $ and $ 3\times3 $ matrices}\label{Section: 23matrices}
We denote by $GL(2,\mathbb{Z})$ the general linear group of degree two
over the field of integers, that is, the
group of $ 2\times 2 $ invertible matrices with integer entries. Note
that these matrices have determinant $\pm 1$. Restricting the matrices
in $GL(2,\mathbb{Z})$ to those that have determinant 1, we obtain the
special linear group of degree two over the field of integers,
$SL(2,\mathbb{Z})$.

Let $ \mathbf{G} $ be the group generated by the  matrices
\[
M_{a}=
\left [
\begin{array}{cc}
1&2\\
0&1\\
\end{array}
\right ]~~~\mbox{and}~~~
M_{b}=
\left [
\begin{array}{cc}
1&0\\
2&1\\
\end{array}
\right ].
\]
There exists an isomorphism $ \varphi $ from $\mathbf{F}_2 $ onto
$\mathbf{G} $ by \cite{KM79}. Note that $ M_a $ and $ M_b $ are
integer matrices with determinant 1, which proves that $ \mathbf{F}_2
$ is a subgroup of $ SL(2,\mathbb{Z}) $.

Now the question is whether $ \mathfrak{L}(GL(2,\mathbb{Z}))$ and $
\mathfrak{L}(SL(2,\mathbb{Z}))$ correspond to larger classes of
languages than the class of context-free languages. We are going to
use the following fact to prove that the answer is negative.

\begin{fact}\textup{\cite{Co05}}  \label{fact: finite}
	Suppose $G$ is a finitely generated group and $H$ is a subgroup of
	finite index. Then $\mathfrak{L}(G) = \mathfrak{L}(H)$.
\end{fact}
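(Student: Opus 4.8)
The plan is to establish the two inclusions $\mathfrak{L}(H)\subseteq\mathfrak{L}(G)$ and $\mathfrak{L}(G)\subseteq\mathfrak{L}(H)$ separately. The first is immediate: since $H$ is a subgroup of $G$, an $H$-automaton is literally a $G$-automaton whose register happens to take values in $H$; the neutral element is the same, and because the inclusion $H\hookrightarrow G$ is an injective homomorphism, the register equals $e$ in $H$ exactly when it equals $e$ in $G$. Hence the accepted language is unchanged and $\mathfrak{L}(H)\subseteq\mathfrak{L}(G)$.

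For the reverse inclusion I would simulate a $G$-automaton $\mathcal{F}=(Q,\Sigma,G,\delta,q_0,Q_a)$ by an $H$-automaton, using the finite index $n=[G:H]$. Fix right-coset representatives $g_1=e,g_2,\dots,g_n$, so that $G=\bigsqcup_{i=1}^{n}Hg_i$ (group multiplication written by juxtaposition); then every $x\in G$ has a unique factorization $x=h\,g_i$ with $h\in H$ and $i\in\{1,\dots,n\}$. The idea is to keep the factor $h$ in the register of the new automaton and the index $i$ in its finite control, so its state set becomes $Q\times\{1,\dots,n\}$. For each index $i$ and each group element $m$ occurring in a transition of $\mathcal{F}$, precompute the unique pair $(h'(i,m),\,j(i,m))$ with $h'(i,m)\in H$ and $g_i\,m=h'(i,m)\,g_{j(i,m)}$. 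A transition of $\mathcal{F}$ that reads $\sigma$, goes from $q$ to $q'$ and writes $m$ is then replaced by one that reads $\sigma$, goes from $(q,i)$ to $(q',j(i,m))$ and writes $h'(i,m)\in H$: if the old register of $\mathcal{F}$ is $x=h\,g_i$, the new one is $x\circ m=h\,g_i\,m=\bigl(h\,h'(i,m)\bigr)\,g_{j(i,m)}$, so the new register of the simulator, $h\circ h'(i,m)$, is precisely the $H$-factor of $x\circ m$. Setting the initial state to $(q_0,1)$ matches the initial register $e=e\,g_1$, and a routine induction on computation length gives an invariant-preserving bijection between the runs of $\mathcal{F}$ and those of the simulator.

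It remains to translate the acceptance condition. Since $g_1=e$ and the cosets $Hg_i$ are pairwise disjoint, for $x=h\,g_i$ one has $x=e$ if and only if $i=1$ and $h=e$. So I would take $\{(q,1):q\in Q_a\}$ as the accepting states of the simulator; then the simulator accepts (register $e$, control state in $Q_a\times\{1\}$) exactly when $\mathcal{F}$ accepts (register $e$, state in $Q_a$). The two automata therefore recognize the same language, which yields $\mathfrak{L}(G)\subseteq\mathfrak{L}(H)$ and completes the argument.

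The step requiring care is the choice of sides. One must use \emph{right} cosets, so that right-multiplying the register by $m$ affects only the coset index and leaves a correction term that genuinely lies in $H$; with left cosets the factor $h$ would be multiplied by $m$ and escape $H$. The only other thing to verify is that the table of correction elements $h'(i,m)$ is finite, which holds because a finite automaton has finitely many transitions and hence uses only finitely many register elements. Finite generation of $G$ is part of the ambient setting but is not actually needed for this simulation.
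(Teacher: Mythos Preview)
Your argument is correct and is the standard Schreier coset construction: track the right-coset index in the finite control and the $H$-component in the register, using that for right multiplication the index update $i\mapsto j(i,m)$ depends only on $i$ and $m$. Your remarks about needing \emph{right} cosets and about the correction table being finite are both to the point.

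There is, however, nothing to compare against in this paper: the statement is recorded as a Fact with a citation to \cite{Co05} and is not proved here. The paper later asserts (Theorem~\ref{thm: rtindex}) that the same construction goes through under a linear-time bound, again deferring to \cite{Co05}; your simulation visibly preserves step count, so it also covers that refinement. Your closing observation that finite generation of $G$ is not used by the simulation is accurate; the hypothesis is carried over from the source but plays no role in this direction of the proof.
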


Now we are ready to state our theorem.

\begin{thrm}\label{theorem:gl}
	$ \mathsf{CF} =\mathfrak{L}(\mathbf{F}_2) = \mathfrak{L}(SL(2,\mathbb{Z})) = \mathfrak{L}(GL(2,\mathbb{Z}))$. 
\end{thrm}
\begin{proof} We are going to use Fact  \ref{fact: finite} to prove the result. 
	Since $ SL(2,\mathbb{Z}) $ has index 2 in $ GL(2,\mathbb{Z})$
	and $GL(2,\mathbb{Z})$ is finitely generated, $\mathfrak{L}(GL(2,\mathbb{Z})) =
	\mathfrak{L}(SL(2,\mathbb{Z}))$. Since $\mathbf{F}_2$ has index 12 in 
	$SL(2,\mathbb{Z})$ \cite{BO08} and $SL(2,\mathbb{Z})$ is finitely generated, $
	\mathfrak{L}(SL(2,\mathbb{Z})) = \mathfrak{L}(\mathbf{F}_2)$ which is equal to 
	the family of context-free languages by Fact \ref{fact: cf}.
\end{proof}

%%%%%%
\begin{comment}
Now we can conclude the following.

\begin{corollary}
$\mathsf{CF} = \mathfrak{L}(\mathbf{F}_2)=
\mathfrak{L}(SL(2,\mathbb{Z})) = \mathfrak{L}(GL(2,\mathbb{Z})) $.
\end{corollary}
\begin{proof}
The equalities follow by Fact\ref{fact: cf} and Theorem \ref{theorem:gl}.
\end{proof}

\end{comment}
%%%%%%%

Let us now investigate the group $SL(3,\mathbb{Z})$, the group of 
$3 \times 3 $ integer matrices with determinant~$1$.

We start by looking at an important subgroup of $SL(3,\mathbb{Z})$,
the discrete Heisenberg group. The discrete Heisenberg group
$\mathbf{H}$ is defined as $\langle a,b | ab=bac,ac=ca,bc=cb
\rangle$,
where
$c=a^{-1}b^{-1}ab$ is the commutator of $a$ and $b$.

$$ a=\left [
\begin{array}{ccc}
1&1&0\\
0&1&0\\
0&0&1
\end{array}
\right ]~~~
b=\left [
\begin{array}{ccc}
1&0&0\\
0&1&1\\
0&0&1
\end{array}
\right ]~~~
c=\left [
\begin{array}{ccc}
1&0&1\\
0&1&0\\
0&0&1
\end{array}
\right ]
$$

Any element $g \in \mathbf{H}$ can be written uniquely as $b^ja^ic^k$.
$$ g=\left [
\begin{array}{ccc}
1&i&k\\
0&1&j\\
0&0&1
\end{array}
\right ] = b^ja^ic^k
$$

It is shown in \cite{Re10} that the languages
$\mathtt{MULT}=\{x^py^qz^{pq} | p,q\geq 0\}$,
$\mathtt{COMPOSITE}=\{x^{pq}| p,q >1\}$ and
$\mathtt{MULTIPLE}=\{x^py^{pn}|p \in \mathbb{N}\}$ can be recognized
by $\mathbf{H}$-automata, using the special multiplication property
of the group.

Correcting a small error in \cite{Re10}, we rewrite the multiplication
property of the elements
of $\mathbf{H}$.

\begin{equation*}\label{eq0}
(b^xa^yc^z)(b^{x'}a^{y'}c^{z'})= b^{x+x'}a^{y+y'}c^{z+z'+yx'} 
\end{equation*}

We can make the following observation using the fact that
$\mathfrak{L}(\mathbf{H})$ contains non-context-free languages.

\begin{thrm}\label{thm: sl2z3z}
	$\mathfrak{L}(SL(2,\mathbb{Z})) \subsetneq \mathfrak{L}(SL(3,\mathbb{Z}))$.
\end{thrm}
\begin{proof}
	It is obvious that an $SL(2,\mathbb{Z})$-automaton can be simulated by
	an $SL(3,\mathbb{Z})$-automaton.
	Note that $\mathfrak{L}(SL(2,\mathbb{Z}))$ is the family of
	context-free languages by Theorem \ref{theorem:gl}. Since
	$\mathfrak{L}(\mathbf{H}) \subsetneq \mathfrak{L}(SL(3,\mathbb{Z}))$
	and the non-context-free language $\mathtt{MULT}=\{x^py^qz^{pq} | p,q\geq 0\}$ can be recognized by an $ \mathbf{H} $-automaton \cite{Re10}, the
	result follows.
\end{proof}

The following result
is a direct consequence of Fact \ref{fact: finite}.

\begin{thrm}\label{thm: sl3zgl3z}
	$ \mathfrak{L}(SL(3,\mathbb{Z}))=\mathfrak{L}(GL(3,\mathbb{Z})) $.
\end{thrm}

\begin{proof} Since $ GL(3,\mathbb{Z}) $ is a finitely generated group and $
	SL(3,\mathbb{Z}) $ has finite index in $ GL(3,\mathbb{Z}) $, the
	result follows by Fact \ref{fact: finite}.
\end{proof}

We have talked about the discrete Heisenberg group $ \textbf{H} $. Now let us look at a
 subgroup of $\mathbf{H}$ generated by the matrices $B$ and $C$, which
 we will call $\mathbf{G_2}$.
 
 $$ B=\left [	
 \begin{array}{ccc}
 1&0&0\\
 0&1&1\\
 0&0&1
 \end{array}
 \right ]~~~
 C=\left [
 \begin{array}{ccc}
 1&0&1\\
 0&1&0\\
 0&0&1
 \end{array}
 \right ]~~~
 $$	
 $\mathbf{G_2}=\langle B,C |BC=CB\rangle $ is a free Abelian group of
 rank 2, and therefore it is isomorphic to $\mathbb{Z}^2$.
 
 We conclude the following about the language recognition power of $
 \mathbb{Z}^2$ and $\mathbf{H}$.
 
 \begin{thrm}\label{theorem: Z2H}
 	$\mathfrak{L}(\mathbb{Z}^2) \subsetneq \mathfrak{L}(\mathbf{H})$.
 \end{thrm}
 \begin{proof}
 	Since $\mathbb{Z}^2$ is a subgroup of $\mathbf{H}$,
 	$\mathfrak{L}(\mathbb{Z}^2) \subseteq 
 	\mathfrak{L}(\mathbf{H}) $ follows. The inclusion is proper since
 	$\mathbf{H}$ can recognize the language 
    $\mathtt{MULT}=\{x^py^qz^{pq} | p,q\geq 0\}$
 	\cite{Re10}, whereas any bounded language in $ \mathfrak{L}(\mathbb{Q}^+) $ is semilinear \cite{ISK76}.
 \end{proof}

Now let us move on to the discussion about matrix groups with rational entries.

Let us denote by $GL(2,\mathbb{Q})$ the general linear group of degree
two over the field of rational  numbers, that is, the group of
invertible matrices with rational entries.  Restricting the matrices
in $GL(2,\mathbb{Q})$ to those that have determinant 1, we obtain the
special  linear group of degree two over the field of rationals,
$SL(2,\mathbb{Q})$.

We will start by proving that allowing rational entries enlarges the
class of languages recognized by matrices with determinant 1.

\begin{thrm}\label{theorem: UPOWODD}
	$ \mathfrak{L}(SL(2,\mathbb{Z})) \subsetneq \mathfrak{L}(SL(2,\mathbb{Q}))  $.
\end{thrm}
\begin{proof} It is obvious that $ \mathfrak{L}(SL(2,\mathbb{Z}))
	\subseteq \mathfrak{L}(SL(2,\mathbb{Q}))  $. We will prove that the
	inclusion is proper.
	
	Let us construct an $SL(2,\mathbb{Q})$-automaton $\mathcal{G}$
	recognizing the language $\mathtt{L}=\{a^{2^{2n+1}}|$  $ n \geq 0 \}$. 
	The state diagram of $ \mathcal{G} $ and the matrices are given in 
	Figure \ref{fig:uoddpow}. Without scanning
	any input symbol, $\mathcal{G}$ first multiplies its register with
	the matrix $A_1$. $\mathcal{G}$ then multiplies its register with the
	matrix $ A_2 $ successively until nondeterministically moving to the
	next state. After that point, $\mathcal{G}$ starts reading the string
	and multiplies its register with the matrix $A_3$ for each scanned
	$a$. At some point, $\mathcal{G}$ nondeterministically stops reading
	the rest of the string and multiplies its register with the matrix
	$A_4$. After successive multiplications with $A_4$, $\mathcal{G}$
	nondeterministically decides moving to an accept state. 
	
	\begin{figure}[h]
%		\centering
		\includegraphics[width=0.8\linewidth]{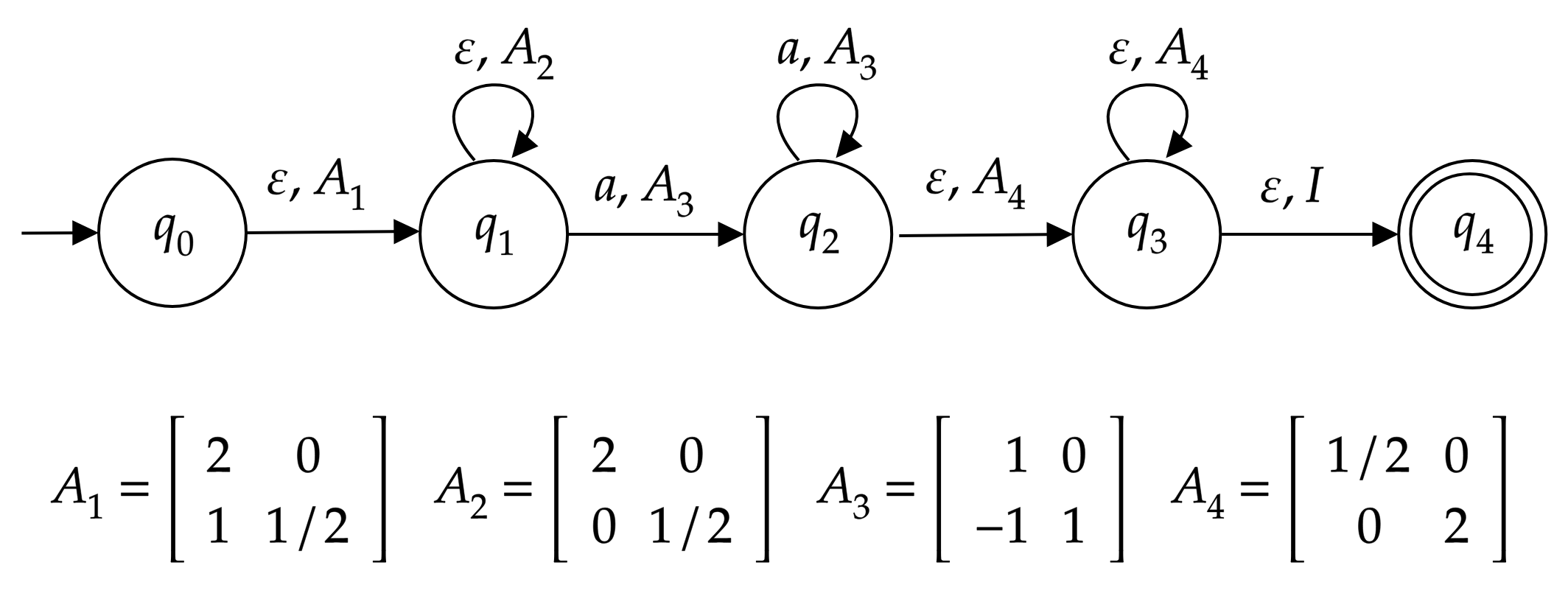}

		\caption{State diagram of $ \mathcal{G} $ accepting the language 
			$\mathtt{L}=\{a^{2^{2n+1}} | n \geq 0 \}$ }
		\label{fig:uoddpow}	
	\end{figure}
	
	%%%%%	
	\begin{comment}	
	\[
	A_1=
	\left [
	\begin{array}{cc}
	2&0\\
	1&\frac{1}{2}\\
	\end{array}
	\right ]~~~
	A_2=
	\left [
	\begin{array}{cc}
	2&0\\
	0&\frac{1}{2}\\
	\end{array}
	\right ]~~~
	A_3=
	\left [
	\begin{array}{cc}
	1&0\\
	-1&1\\
	\end{array}
	\right ]~~~
	A_4=
	\left [
	\begin{array}{cc}
	\frac{1}{2}&0\\
	0&2\\
	\end{array}
	\right ]
	\]
	\end{comment}	
	%%%%%
	Let us trace the value of the register at different stages of the
	computation. Before reading the first input symbol, the register has
	the value
	$$\left [
	\begin{array}{cc}
	2^{x+1}&0\\
	2^x  &\frac{1}{2^{x+1}}\\
	\end{array}
	\right ]$$
	
	\noindent as a result of the multiplications with the matrix $A_1$
	and $ x $ times the matrix $ A_2 $. Multiplication
	with each $A_3$ leaves $2^{x+1}$ and $ \frac{1}{2^{x+1}} $ unchanged
	while subtracting $ \frac{1}{2^{x+1}} $ from $2^x$ for each scanned
	$a$. As a result of $ y $  multiplications with $ A_3 $, the register
	will have the value
	
	$$\left [
	\begin{array}{cc}
	2^{x+1}&0\\
	2^x-\frac{y}{2^{x+1}} &\frac{1}{2^{x+1}}\\
	\end{array}
	\right ].$$
	
	For the rest of the computation, $\mathcal{G}$ will multiply its
	register with $A_4$ until
	nondeterministically moving to the final state. As a result of $ z $
	multiplications with $ A_4 $, the register will have the value
	
	$$\left [
	\begin{array}{cc}
	\frac{2^{x+1}}{2^z}&0\\
	\bigl( 2^x-\frac{y}{2^{x+1}}\bigr)\frac{1}{2^z} &\frac{2^z}{2^{x+1}}\\
	\end{array}
	\right ].$$
	
	The final value of the register is equal to the identity matrix when
	$ y=2^{2x+1} $ and $ z=x+1 $, which is possible only when the length
	of the input string is $ 2^{2x+1} $ for some $ x\geq 0 $. In the
	successful branch, the register will be equal to the identity matrix
	and $\mathcal{G}$ will end up in the final state having successfully
	read the input string. For input strings which are not members of
	$\mathtt{L}$, either the computation will end before reading the whole
	input string, or the final state will be reached with the register
	value not equaling the identity matrix.
	
	Since the matrices used during the computation are 2 by 2 rational
	matrices with determinant 1, $ \mathtt{L} \in
	\mathfrak{L}(SL(2,\mathbb{Q})) $. $ \mathfrak{L}(SL(2,\mathbb{Q})) $
	contains a unary nonregular language, which is not true for $
	\mathfrak{L}(SL(2,\mathbb{Z})) $ by Theorem \ref{theorem:gl} and we
	conclude the result.
\end{proof}

Let us note that the set of languages recognized by $
\mathbb{Q}^+ $-automata is a proper subset of the set of languages
recognized by $ SL(2,\mathbb{Q}) $-automata, which can be concluded 
with the help of the following fact.

\begin{thrm}\label{thm: qsl2q}
	$ \mathfrak{L}(\mathbb{Q}^+) \subsetneq \mathfrak{L}(SL(2,\mathbb{Q}))  $.
\end{thrm}

\begin{proof} Let $ \mathtt{L} \in \mathfrak{L}(\mathbb{Q}^+) $ and let $
	\mathcal{G} $ be a $ \mathbb{Q}^+ $-automaton recognizing $ \mathtt{L}
	$. We will construct an $ SL(2,\mathbb{Q}) $-automaton  $
	\mathcal{G}' $ recognizing $ \mathtt{L} $. Let $ S =\{s_1,\dots,s_n\}
	$ be the set of elements multiplied with the register during the
	computation of $ \mathcal{G} $. We define the mapping $ \varphi $ as
	follows.
	\[ \varphi: s_i
	\mapsto
	\left [
	\begin{array}{cc}
	s_i&0\\
	0&\frac{1}{s_i}\\
	\end{array}
	\right ]~~~
	\]
	The elements $\varphi(s_i) $ are $ 2 \times 2 $ rational matrices with
	determinant 1. Let $ \delta $ and $ \delta' $ be the transition
	functions of $ \mathcal{G} $ and $ \mathcal{G}' $ respectively. We let
	$ (q',s_i) \in \delta(q,\sigma) \iff (q',\varphi(s_i)) \in
	\delta'(q,\sigma)$ for every $ q,q' \in Q$, $ \sigma \in \Sigma $ and
	$ s_i \in S $. The resulting $ \mathcal{G}' $ recognizes $ \mathtt{L}$.
	
	The inclusion is proper since  $\mathtt{L}=\{a^{2^{2n+1}} | n \geq 0
	\} \in \mathfrak{L}(SL(2,\mathbb{Q}))$ by Theorem \ref{theorem:
		UPOWODD}, and $ \mathfrak{L}(\mathbb{Q}^+) $ does not contain any
	unary nonregular languages by Fact \ref{fact: unary}, noting that 
	$ \mathbb{Q}^+ $-automata are equivalent to 1NFAMW's.
\end{proof}

We will now look at a special subgroup of $GL(2,\mathbb{Q})$.

For two integers $m$ and $n$, the \textit{Baumslag-Solitar group}
$BS(m,n)$ is defined as $BS(m,n)=\langle
a,b | ba^mb^{-1}=a^n\rangle$. We are going to focus on
$BS(1,2)=\langle a,b|bab^{-1}=a^2\rangle$.

Consider the matrix group $G_{BS}$ generated by the matrices

$$A=\left [
\begin{array}{cc}
1&0\\
-1 &1\\
\end{array}
\right ]~~~\mbox{and}~~~
B=\left [
\begin{array}{cc}
1/2&0\\
0 &1\\
\end{array}
\right ].$$

Consider the isomorphism $a \mapsto A$, $b \mapsto B$. The matrices
$A$ and $B$ satisfy the property $BAB^{-1} = A^2$,
\[ \left [
\begin{array}{cc}
1/2&0\\
0 &1\\
\end{array}
\right ]
\left [
\begin{array}{cc}
1&0\\
-1 &1\\
\end{array}
\right ]
\left [
\begin{array}{cc}
2&0\\
0&1\\
\end{array}
\right ]=
\left [
\begin{array}{cc}
1&0\\
-2&1\\
\end{array}
\right ],  \]
and we conclude that $G_{BS}  $ is isomorphic to $BS(1,2)$.

We will prove that there exists a $ BS(1,2) $-automaton which
recognizes a non-context-free language.

\begin{thrm}\label{theorem: upow}
	$ \mathfrak{L}(BS(1,2)) \nsubseteq \mathsf{CF} $.
\end{thrm}
\begin{proof}
	Let us construct a $BS(1,2)$-automaton $\mathcal{G}$ recognizing the
	language $\mathtt{UPOW}=\{a^{2^n}|n\geq 0\}$. The state diagram of 
	$ \mathcal{G} $ and the matrices are given in Figure \ref{fig:upow}. 
	Without scanning any input symbol, $\mathcal{G}$ multiplies its register 
	with the matrix $A_1$ successively. $\mathcal{G}$
	nondeterministically moves to the next state reading the first input
	symbol without modifying the register.
	After that point, $\mathcal{G}$ starts reading the string and
	multiplies its register with the matrix $A_2$
	for each scanned $a$. At some point, $\mathcal{G}$
	nondeterministically stops reading
	the rest of the string and multiplies its register with the element
	$A_3$. After successive multiplications
	with $A_3$, $\mathcal{G}$ nondeterministically decides to move to an
	accept state. 
	\begin{figure}[!htb]
		\centering
		\includegraphics[width=0.6\linewidth]{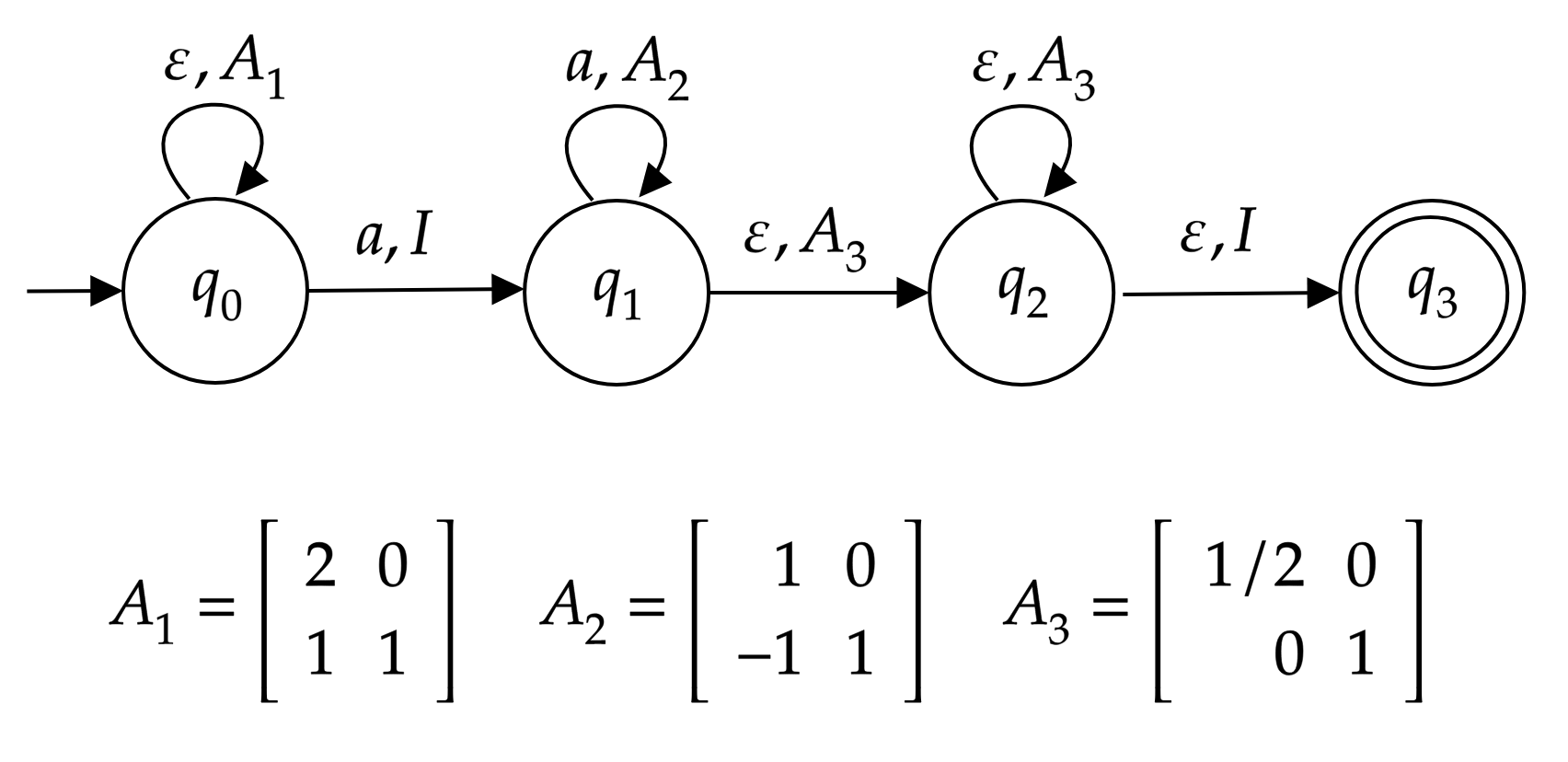}
		\caption{State diagram of $ \mathcal{G} $ recognizing
			\label{fig:upow}
			$\mathtt{UPOW}=\{a^{2^n}|n\geq 0\}$}
	\end{figure}

	As a result of $ i $ multiplications with $ A_1 $, the register has the value 
	
	$$\left [
	\begin{array}{cc}
	2^i&0\\
	2^i -1 &1\\
	\end{array}
	\right ]$$
	before reading the first input symbol. Multiplication
	with each $A_2$ leaves $2^i$ unchanged while subtracting 1 from $2^i -
	1$ for each scanned $a$. The register will have the value 
	$$\left [
	\begin{array}{cc}
	2^i&0\\
	2^i -1 - j &1\\
	\end{array}
	\right ]$$
	as a result of $ j $ multiplications with the matrix $ A_2 $.
	
	For the rest of the computation, $\mathcal{G}$ will multiply its
	register with $A_3$ resulting in the register value 
	
	$$\left [
	\begin{array}{cc}
	\frac{2^i}{2^k}&0\\
	2^i -1 - j &1\\
	\end{array}
	\right ]$$
since each multiplication with $ A_3 $ divides $ 2^i $ by 2.
	
	The register contains the identity matrix at the end of the computation if $ i=k $ and $ j=2^i-1 $ which is possible if the input string is of the form $ a^{1+2^{i}-1} = a^{2^i} $. In the successful
	branch, the register will be equal to the
	identity matrix and $\mathcal{G}$ will end up in the final state
	having successfully read the input string.
	
	For input strings which are not members of $\mathtt{UPOW}$, either the
	computation will end before reading the
	whole input string or the final state will be reached with the
	register value being different from the
	identity matrix.
	Note that $A_1=B^{-1}A^{-1}$, $A_2=A$ and $A_3=B$, where $A$ and $B$
	are the generators of the group $G_{BS}$ and recall that $G_{BS}$ is
	isomorphic to $BS(1,2)$. Since $ \mathtt{UPOW} $ is a unary nonregular
	language, it is not context-free and we conclude the result.
\end{proof}

Note that $\mathfrak{L}(\mathbb{Z}) \subsetneq \mathfrak{L}(BS(1,2)) $ 
since the subgroup generated by $ a $ in $ BS(1,2) $ is isomorphic to 
$ \mathbb{Z} $ and $\mathfrak{L} (BS(1,2)) $ contains a unary nonregular language.

\subsection{Automata on matrices of higher dimensions}

In \cite{MS97}, it is proven that $ \mathbf{F}_2 \times \mathbf{F}_2 $-automata are as powerful as Turing machines.

\begin{fact}\textup{\cite{MS97}}\label{fact: RE}		
	$ \mathfrak{L} ( \mathbf{F}_2 \times \mathbf{F}_2 )$ is the family of
	recursively enumerable languages.
\end{fact}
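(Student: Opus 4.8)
The plan is to prove the two inclusions $\mathfrak{L}(\mathbf{F}_2\times\mathbf{F}_2)\subseteq\mathsf{RE}$ and $\mathsf{RE}\subseteq\mathfrak{L}(\mathbf{F}_2\times\mathbf{F}_2)$ separately, with the characterisation $\mathfrak{L}(\mathbf{F}_2)=\mathsf{CF}$ of Fact~\ref{fact: cf} carrying the weight in the second one.

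The inclusion $\mathfrak{L}(\mathbf{F}_2\times\mathbf{F}_2)\subseteq\mathsf{RE}$ is the routine direction. The group operations of $\mathbf{F}_2\times\mathbf{F}_2$ are effective --- an element is a pair of freely reduced words, multiplication is concatenation followed by free reduction, and equality with the neutral element is emptiness of both words --- so given an $\mathbf{F}_2\times\mathbf{F}_2$-automaton $\mathcal{A}$ and an input $w$, a Turing machine can enumerate the computation paths of $\mathcal{A}$ on $w$ one by one, recompute the register value along each path, and halt and accept as soon as it finds a path that ends in an accepting state with register equal to the identity. This is a semi-decision procedure for membership, so $L(\mathcal{A})\in\mathsf{RE}$.

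For $\mathsf{RE}\subseteq\mathfrak{L}(\mathbf{F}_2\times\mathbf{F}_2)$ I would first isolate two constructions. (i) $\mathfrak{L}(G)$ is closed under arbitrary, possibly erasing, homomorphisms: given a $G$-automaton for $L\subseteq\Sigma^{*}$ and a homomorphism $h\colon\Sigma^{*}\to\Gamma^{*}$, build a $G$-automaton for $h(L)$ that runs the original one but, whenever it wants to read a letter $\sigma$, instead reads the fixed finite word $h(\sigma)$ off its own input (an $\varepsilon$-move if $h(\sigma)=\varepsilon$), carrying out exactly the same register multiplications. (ii) If $L_1,L_2\in\mathfrak{L}(\mathbf{F}_2)$ then $L_1\cap L_2\in\mathfrak{L}(\mathbf{F}_2\times\mathbf{F}_2)$: run two $\mathbf{F}_2$-automata in parallel on the common input, storing the register of the first in the first $\mathbf{F}_2$-coordinate and that of the second in the second coordinate, interleaving their $\varepsilon$-moves freely and synchronising on input letters, and accepting exactly when both are in accepting states and both coordinates are the identity. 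Together with Fact~\ref{fact: cf}, (i) and (ii) show that $h(L_1\cap L_2)\in\mathfrak{L}(\mathbf{F}_2\times\mathbf{F}_2)$ for all context-free $L_1,L_2$ and every homomorphism $h$.

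It then remains to recall the classical ``valid computations'' theorem: for a recursively enumerable language $L$, fix a Turing machine $M$ with $L=L(M)$ and encode accepting computations of $M$ as strings $\#C_0\#C_1^{r}\#C_2\#C_3^{r}\#\cdots$ in which configurations alternate between forward and reversed writing. The set of such strings is $L_1\cap L_2$ where $L_1$ is the context-free language checking well-formedness together with the legality of the steps $C_0\vdash C_1,\ C_2\vdash C_3,\dots$ and $L_2$ checks the steps $C_1\vdash C_2,\ C_3\vdash C_4,\dots$; each condition is verifiable by a pushdown automaton precisely because adjacent configurations are written in opposite orders. An erasing homomorphism $h$ that deletes everything except the input portion of $C_0$ then satisfies $h(L_1\cap L_2)=L$, so $L\in\mathfrak{L}(\mathbf{F}_2\times\mathbf{F}_2)$. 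The main obstacle is this last step: the reductions in (i) and (ii) are bookkeeping, but the valid-computations encoding must be arranged so that $L_1\cap L_2$ consists of \emph{exactly} the genuine accepting histories of $M$ and nothing spurious, since otherwise $h$ would produce strings outside $L$.
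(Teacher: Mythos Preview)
The paper does not prove this statement at all: it is stated as a \emph{Fact} with a citation to \cite{MS97} and is then used as a black box in the proof of Theorem~\ref{thm: sl4z}. So there is nothing in the paper to compare your argument against.

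Your proof is correct and is essentially the standard argument. The first inclusion is routine, as you say. For the second, you reduce to the classical theorem of Ginsburg, Greibach, and Harrison (also attributed to Hartmanis) that every recursively enumerable language is a homomorphic image of the intersection of two context-free languages, and then observe that an $\mathbf{F}_2\times\mathbf{F}_2$-automaton can compute such an intersection by running two $\mathbf{F}_2$-automata in parallel, one per coordinate. Combined with Fact~\ref{fact: cf} and the easy closure of $\mathfrak{L}(G)$ under homomorphic images, this yields the result. This is the same line of reasoning used in \cite{MS97}, so your reconstruction matches the original source even though the present paper omits the proof.

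One small remark: in step~(i) you phrase the construction as reading $h(\sigma)$ off the input, which is the right idea but would be cleaner to state as nondeterministically guessing a preimage letter $\sigma$ and then consuming $h(\sigma)$; as written it could be misread as deterministic. This is cosmetic and does not affect correctness.
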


We make the following observation.

\begin{thrm}\label{thm: sl4z}
	$\mathsf{RE} =  \mathfrak{L} ( \mathbf{F}_2 \times \mathbf{F}_2 ) =
	\mathfrak{L} (SL(4,\mathbb{Z}) )$.
\end{thrm}

\begin{proof} The first equality is Fact \ref{fact: RE}.
	Recall from Section \ref{Section: 23matrices} that $ \varphi $ is an isomorphism from $ \mathbf{F}_2$ onto $
	\mathbf{G} $, the matrix group generated by the matrices $ M_a $ and $
	M_b $.
	Let $ \mathbf{G}' $ be the following group of matrices
	\[ \left \{
	\left [
	\begin{array}{clll}
	\multicolumn{2}{l}
	{\multirow{2}{*}{$M_1$}} & 0 & 0 \\
	& & 0 & 0 \\
	0&0 & \multicolumn{2}{c}{\multirow{2}{*}{ $M_2$}} \\
	0&0 & & \\
	\end{array}
	\right ], \ M_1, \ M_2 \in \mathbf{G} \right \}
	.\]
	
	We will define the mapping $ \psi:\mathbf{F}_2 \times \mathbf{F}_2
	\rightarrow \mathbf{G}' $ as $ \psi(g_1,g_2) =
	(\varphi(g_1),\varphi(g_2)) $ for all $ (g_1,g_2)\in \mathbf{F}_2
	\times \mathbf{F}_2 $ which is an isomorphism from $\mathbf{F}_2
	\times \mathbf{F}_2  $ onto $ \mathbf{G}' $.
	
	This proves that $\mathbf{F}_2 \times \mathbf{F}_2  $ is isomorphic to
	a subgroup of $ SL(4,\mathbb{Z}) $. The fact that $  \mathfrak{L} (
	\mathbf{F}_2 \times \mathbf{F}_2 ) $ is the set of recursively
	enumerable languages helps us to conclude that $
	\mathfrak{L}(SL(n,\mathbb{Z})) $  is the set of recursively enumerable
	languages for $ n \geq 4 $.
\end{proof}

Let us also state that the classes of languages recognized by automata
over supergroups of $ SL(4,\mathbb{Z}) $ such as $ GL(4,\mathbb{Z})
$ or $ SL(4,\mathbb{Q}) $ are also identical to the class of
recursively enumerable languages.
\begin{thrm}\label{thm: RE}
	$ \mathfrak{L}(G)  = \mathsf{RE} $, where $ G $ is any matrix group whose matrix entries are computable numbers and $ SL(4,\mathbb{Z})$ is a subgroup of $G $.
	\end{thrm}
	\begin{proof}
		Note that any finite automaton over a matrix group can be simulated by a nondeterministic Turing machine which keeps track of the register simply by multiplying the matrices and checking whether the identity matrix is reached at the end of the computation, provided that the matrix entries are computable numbers. Since $ \mathsf{RE}=\mathfrak{L} (SL(4,\mathbb{Z}) ) $ and $ G $ contains $ SL(4,\mathbb{Z})$ as a subgroup, $ \mathfrak{L}(G) $ is the set of recursively enumerable languages.
		\end{proof}

We summarize the results in Figure \ref{fig: diagram}. Solid arrows
represent proper inclusion, dashed arrows represent inclusion and
dashed lines represent incomparability. 
\begin{figure}[h!]
	\centering
	\includegraphics[width=1\linewidth]{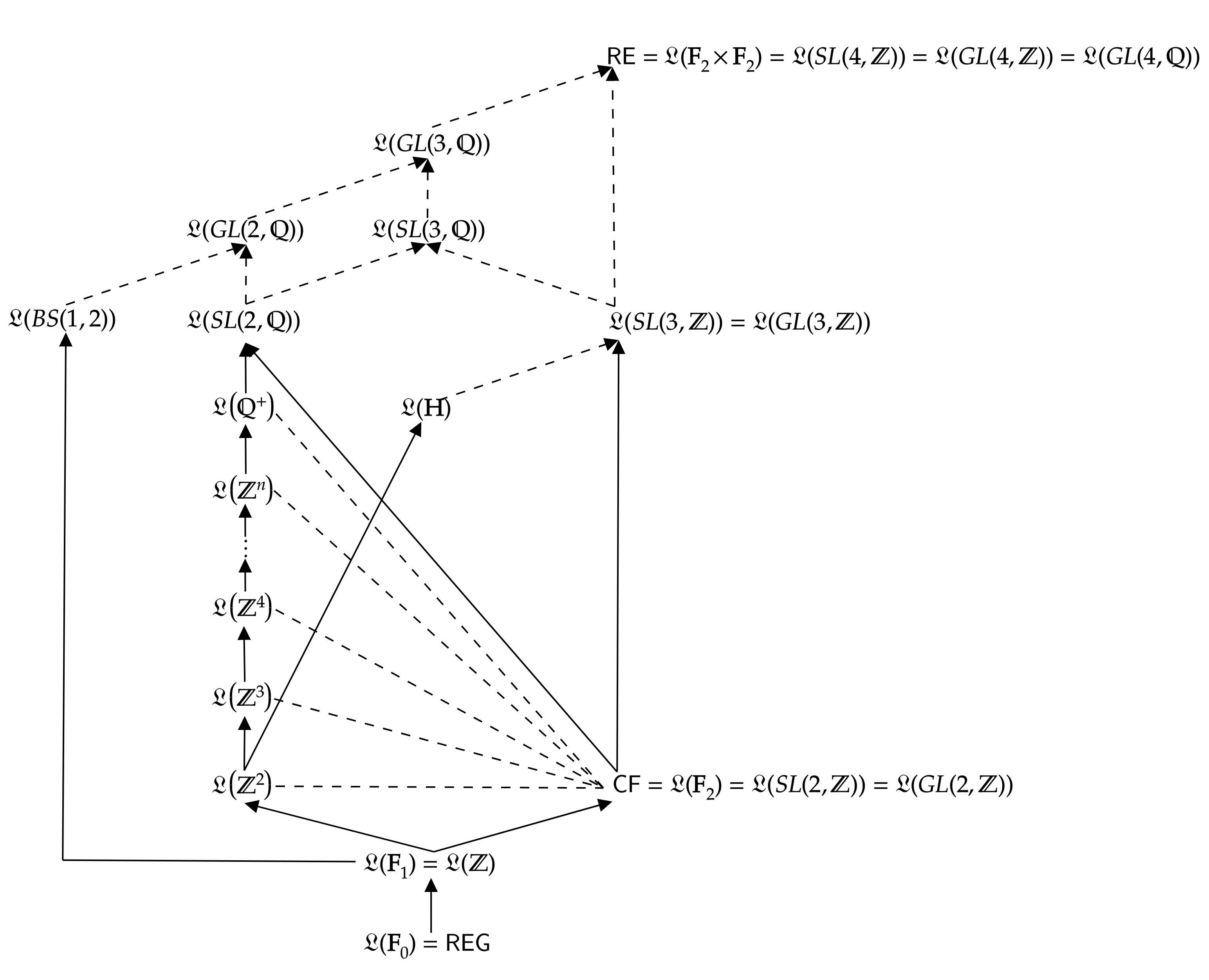}
	\caption{Language classes associated with groups}
	\label{fig: diagram}
\end{figure}

\section{Time complexity}
\label{Section: time}
In the previous section, we compared various automaton models solely on the basis of the groups they employed as a computational resource. The theory of computational complexity deals with various different types of such resources, the allowed runtime of the machines being the most prominent among them. Some of the automata we saw in Section \ref{Section: results} (e.g. Figure \ref{fig:upow}) have arbitrarily long computations, and it is a legitimate question to ask whether our results, for instance, the relationships in Figure \ref{fig: diagram}, would still hold if one imposed common time bounds on the automata. We study such questions in this section.

\subsection{Definitions}

A group automaton $ \mathcal{G} $ recognizing language $\mathtt{L} $ is said to be \textit{strongly $ t(n) $ time-bounded} if for any input string $ x $ with $ |x|=n $, every computation of $ \mathcal{G} $ on $ x $ takes at most $ t(n) $ steps. We will denote the set of languages recognized by strongly $ t(n) $-time bounded $ G $-automata by $ \mathfrak{L}(G)_{t(n)}^s $.

Although the strong mode of recognition defined above is standard in studies of time complexity, we will be able to prove the impossibility results of the next subsection even when the machines are subjected to the following, looser requirement:

A group automaton $ \mathcal{G} $ recognizing language $\mathtt{L} $ is said to be \textit{weakly  $ t(n) $ time-bounded} if for each accepted input string $ x \in \mathtt{L} $ with $ |x|=n $, $ \mathcal{G} $ has a successful computation which takes at most $ t(n) $ steps. So any input string is allowed to cause longer computations, as long as none of those are accepting for inputs which are not members of $\mathtt{L} $. We will denote the set of languages recognized by weakly $ t(n) $-time bounded $ G
$-automata by $ \mathfrak{L}(G)_{t(n)}^w $. 

A machine is \textit{real-time} if every transition consumes an input symbol. 

Note that the statement $ \mathfrak{L}(G)_{t(n)}^s \subseteq \mathfrak{L}(G)_{t(n)}^w  $ is true by definition.

Let $ X $ be a generator set for the group $ G $. The \textit{length} of $ g \in
G $, denoted $|g|_X$, is the length of the shortest representative for $
g $ in $ (X \cup X^{-1})^* $. Let $B^X_G(n)= \{g \in G,|g|_X\leq n \} $ be the set of all elements in 
$ G $ which can be represented by a word of length at most $ n $. The \textit{growth function of a group} $ G $ with
respect to a generating set $ X $, denoted $ g^{X}_G(n)$, is the cardinality 
of the set $ B^X_G(n) $, that is $ g^{X}_G(n) = | B^X_G(n)|$. The growth function is asymptotically independent of the generating set, and we will denote the growth function of a group $ G $ by $ g_G(n)$.

For a positive integer $ n $, two strings $ w,w' \in \Sigma^* $ are $
n $-\textit{dissimilar for $ \mathtt{L} $}  if $ |w|\leq n $, $ |w'|\leq n $, and there
exists a string $ v \in \Sigma^*  $ with $ |wv|\leq n
$, $ |w'v|\leq n $ such that $ wv \in \mathtt{L} $ iff $w'v \notin
\mathtt{L}  $. Let $ A_\mathtt{L}(n) $ be the maximum $ k $ such that
there exist $ k $ distinct strings that are pairwise $n $-dissimilar.

A finite set of strings $ S $ is said to be a set of \textit{uniformly $ n $-dissimilar strings for $ \mathtt{L} $} if for each string $ w\in S $, there exists a string $ v $ such that
$ |wv|\leq n$ and $wv \in \mathtt{L} $ and for any string $ w'\in S $ such that $w\neq w' $,  $ |w'v|\leq n$ and $ w'v \notin \mathtt{L} $. Let $ U_\mathtt{L}(n) $ be the maximum $ k $ such that there exist $ k $ distinct strings that are uniformly $n $-dissimilar.

Note that the following is always true by definition, since the strings in a uniformly $ n $-dissimilar set are pairwise $ n $-dissimilar.

\begin{lmm}\label{lemma: uniform}
	$ U_\mathtt{L}(n) \leq A_\mathtt{L}(n)$ for all $ n\geq 0 $. 
\end{lmm}

\subsection{Limitations of machines on slow groups running in short time}\label{section: mainthm}

\begin{thrm}\label{thm: growth2}
	Let $ G $ be a group with growth function $ g_G(n)$. $ \mathtt{L} \notin  \mathfrak{L}(G)_{t(n)}^w $ if $g_G(t(n)) \in o( U_\mathtt{L}(n)) $. 
\end{thrm}
\begin{proof}
	Suppose for a contradiction that there exists a weakly $ t(n) $ time-bounded $ G $-automaton $ \mathcal{G} $ recognizing $ \mathtt{L}  $ in time $ t(n) $. For a sufficiently large $ n $, let $S $ be the set of uniformly $ n $-dissimilar strings such that $ |S|=U_\mathtt{L}(n) $. For every string $ w_i \in S $, there exists a string $ v_i $ such that $ w_iv_i\in \mathtt{L} $ and $ w_jv_i\notin \mathtt{L} $ for all $ w_j \in S $ with $ i \neq j $ .
	
	Let $ S_{acc} $ be the set of accepted extended strings of the form $ w_iv_i \in \mathtt{L} $ with $ |w_iv_i|\leq n $ where $ w_i \in S $ and $ w_jv_i \notin \mathtt{L} $ for all $ w_j \in S $ with $ i\neq j $ and $ |w_jv_i|\leq n $. Let $ C $ be the set of $ t(n) $ time bounded accepting computation paths for the strings in $ S_{acc} $. The computation $ c_{w_iv_i } \in C $ on the string $ w_iv_i $  can be written as 
	\[ c_{w_iv_i} =c_{w_iv_i}^{w_i} c_{w_iv_i}^{v_i}\] where $ c_{w_iv_i}^{w_i} $ represents the computation up to the end of the prefix $ w_i $ and $ c_{w_iv_i}^{v_i} $ represents the rest of the computation on the string $ v_i $. 
	
	A configuration of a group automaton is a pair consisting of a state
	and a group element. Let us count the number of configurations that can be reached at the end of the computation $ c_{w_iv_i}^{w_i} $. Since the number of states is constant, the number of configurations that can be reached is dependent on the number of different group elements that can appear in
	the register. After reading a prefix $ w_i $ with $|w_i|= m\leq n $, the product of
	the labels on the edges can be given by $ l=g_{i_1}g_{i_2}\dots g_{i_{k}} $ for some $ k \leq t(m) $, since the computation in consideration is time bounded. $ l $ can be 
	expressed as a product of $ \kappa $ generators, where $ \kappa $ is at 
	most $C\cdot k $ for some constant $ C $, since each group
	element labeling a transition in $ \mathcal{G} $ is composed of at most some constant number of
	generators, which is independent of the length of the string. The number of 
	elements in $ G $ which can be represented as a product of at most $ \kappa$ 
	generators is given by $ g_G(\kappa) $ by the definition of the growth function 
	of $ G $. Hence, the number of different values that can appear in the register 
	after reading a string of length exactly $ m $ is less than or equal to 
	$ g_G(\kappa) $. Since  $ \kappa \leq  C\cdot k $ and $ k \leq t(m) $ and $g_G(t(n)) \in o( U_\mathtt{L}(n)) $, we can 
	conclude that $$ g_G(\kappa)  \leq  g_G(C \cdot t(m)) \in o(U_\mathtt{L}(n)). $$	
	
	Now it is easy to see that the number of different configurations that can be reached at the end of a computation $  c_{w_iv_i}^{w_i} $ is $  o(U_\mathtt{L}(n)) $. Note that the cardinality of the set $ S $, and thus that of $ S_{acc} $, is equal to $ U_L(n) $. Due to the pigeonhole principle, the same configuration must be reached at the end of two computations $  c_{w_{i}v_i}^{w_{i}} $ and $  c_{w_jv_j}^{w_j} $ for some $ i\neq j $. This will result in the acceptance of the strings $ w_{i}v_j $ and $ w_{j}v_{i} $, which are not members of $ \mathtt{L} $. We arrive at a contradiction and conclude that $\mathtt{L} $ cannot be recognized by any weakly $ t(n) $ time-bounded $ G $-automaton.
\end{proof}

%Now let us focus on the word problem language for groups.

\begin{comment}
We have proven the following lemma setting a lower bound on the maximum cardinality of the set of $ n $-disimilar strings in the word problem language of some group $ G $.

\begin{lmm}\label{lemma: growth}\textup{\cite{SDS16}}
Let $ G $ be a finitely generated group with growth function $ g_G(n)
$. $ A_{W(G)}(n)\geq g_G(\frac{n}{2})$.
\end{lmm}
\end{comment}
In the next lemma, we set a lower bound on maximum cardinality of the set of uniformly $ n $-dissimilar strings in the word problem language of some group $ G $.

\begin{lmm}\label{lemma: growth2}
	Let $ G $ be a finitely generated group with growth function $ g_G(n)
	$. Then $ U_{W(G)}(n)\geq g_G(\lfloor \frac{n}{2}\rfloor)$.
\end{lmm}
\begin{proof}
	Let $ X $ be the generator set of $ G $. The number of distinct elements $ g $ in $ G $ which can be represented by a word of length less than or equal to
	$\lfloor \frac{n}{2}\rfloor $ is $ g_G(\lfloor \frac{n}{2}\rfloor)$, which is the cardinality of the set 
	$B_G^X(\lfloor \frac{n}{2}\rfloor)=\{g \in G,|g|_X\leq \lfloor \frac{n}{2}\rfloor\}$. Let $ S $ be the set containing the string representations of the elements in $ B_G^X(\lfloor \frac{n}{2}\rfloor) $. Every $ w_i \in S $ can be extended with $w_i^{-1}$ so that the extended string represents the identity element of $ G $ and has length less than or equal to $n $. 
	Since the strings in $ W(G) $ are those which belong to $ (X \cup X^{-1})^* $ and represent the identity element of $ G $, the extended string $ w_iw^{-1}_{i} \in W(G) $. For every string $ w_j\in S$ such that  $i \neq j$, $ w_jw_i^{-1} \notin W(G)$ since it is not possible for $ w_jw_i^{-1} $ to represent the identity element of $ G $. We conclude that the set $ S $ is uniformly $ n $-dissimilar. Since $ |S|=|B_G^X(\lfloor \frac{n}{2}\rfloor)| =  g_G(\lfloor \frac{n}{2}\rfloor)  $, it follows that $ U_{W(G)}(n)\geq g_G(\lfloor \frac{n}{2}\rfloor)$.
\end{proof}

\begin{comment}
Note that Lemma \ref{lemma: growth} follows from Lemma \ref{lemma: growth2} as a corollary by using Lemma \ref{lemma: uniform}. 

Now we are ready to prove that Theorem \ref{theorem: growth} still holds for weakly time bounded computation.
\end{comment}

The following theorem is about the language recognition power of finite automata over polynomial-growth groups which are weakly polynomial time-bounded.
\begin{thrm}
	Let $ G $ and $ H $ be groups with polynomial and exponential growth
	functions $ g_G(n)$ and $  g_H(n) $, respectively. For any polynomial $ t(n) $, $\mathfrak{L}(H)
	\nsubseteq \mathfrak{L}(G)_{t(n)}^w  $. 
\end{thrm}
\begin{proof}
	Since $ U_{W(H)}(n) \geq g_H(\lfloor \frac{n}{2}\rfloor) $ by Lemma $ \ref{lemma: growth2} $, and $ g_H(n) $ is an exponential function, $ U_{W(H)}(n)$ is also at least exponential.  $g_G(t(n))  $ is a polynomial function, since both $ g_G(n)$ and $ t(n) $  are polynomial. Hence, $ W(H) \notin \mathfrak{L}(G)_{t(n)}^w $ by Theorem \ref{thm: growth2}, and the result follows since $ W(H) $ is trivially in $ \mathfrak{L}(H) $.
\end{proof}

\begin{thrm}\label{thm: polycf}
	Let $ G $ be a group with a polynomial growth function. For any polynomial $ t(n) $,
	$ \mathsf{CF} \nsubseteq \mathfrak{L}(G)_{t(n)}^w $.
\end{thrm}
\begin{proof}
	It is known that the word problem of the free group of rank 2, $
	W(\mathbf{F}_2) $, has an exponential growth function \cite{Gi90}. Assuming that  $
	G $ is a group with polynomial growth function, $ W(\mathbf{F}_2) $
	cannot be recognized by any weakly $ t(n) $ time-bounded $ G $-automaton by
	Theorem \ref{thm: growth2}. Since $ W(\mathbf{F}_2) $ is a
	context-free language, the proof is complete.
\end{proof}

\begin{comment}
\begin{crllr}\label{cor: CF}
$ \mathsf{CF} \nsubseteq \mathfrak{L}(\mathbf{\mathbf{H}})_{t(n)}^w $ where $ t(n) $ is a polynomial.
\end{crllr}
\begin{proof}
It is known that the Discrete Heisenberg group $ \mathbf{H}$ has
polynomial growth function. The result follows by Theorem \ref{thm: polycf}.
\end{proof}
\end{comment}

\subsection{Group automata under linear time bounds}

In this section, we focus on linear-time computation.

Let $ X $ be a generator set. For each symbol $ x\in X $, the functions $ P_x $ and $ Q_x $ are defined as follows.
\begin{align*}
P_x : X^*\rightarrow X^*  \hspace{0.5in}w\mapsto wx\\
Q_x : X^*x \rightarrow X^*\hspace{0.5in}wx\mapsto w
\end{align*}
Let $ K_X $ be the submonoid of all partial functions on $ X^* $ generated by $ P_x  $ and $ Q_x  $ for all $ x \in X $. $ K_X $ is called the \textit{polycyclic monoid} on $ X $. A $ K_X $-automaton is equivalent to a pushdown automaton, where $ P_x $ and $ Q_x $ can be interpreted as pushing and popping symbols on the stack. The equivalence between the two models is due to the nature of the functions $ P_x $ and $ Q_x $, and is described in detail in \cite{Ka09}. The resemblance between the free group and $ K_X $ is used to prove that $ \mathfrak{L}(\mathbf{F}_2)=\mathsf{CF} $ in \cite{Ka09} and \cite{Co05}.

Our aim is to show that $ \mathbf{F}_2 $-automata working in linear time can recognize all context-free languages. It is stated in \cite{Ze13} that $K_X$-automata which consume at least one input symbol at each step are as powerful as $K_X$-automata without any time bound. However, it is not straightforward to see whether the same is true for $ \mathbf{F}_2 $-automata.  

\begin{thrm}\label{thm: rtF2}
	$ \mathfrak{L}(\mathbf{F}_2)_{O(n)}^w =  \mathsf{CF}$.
\end{thrm}
\begin{proof}
	We are going to use the construction of Kambites \cite{Ka09} to prove that any context-free language can be recognized by a weakly linear-time bounded $ \mathbf{F}_2 $-automaton.  
	
	Let $ \mathtt{L} $ be a context-free language and let $ \mathcal{M}=\{Q,\Sigma, K_X, \delta,q_0,Q_a\} $ be a polycyclic monoid automaton recognizing $ \mathtt{L} $. $ K_X $ is the polycyclic monoid on $ X $ where the cardinality of the set $ X $ is $ n $ for some $ n \geq 2 $. Let $ e $ be the identity element of $ K_x $. The construction of Kambites provides an $ \mathbf{F}_{n+1}$-automaton $ \mathcal{G}=\{Q',\Sigma,\mathbf{F}_{n+1}, \delta',q_{0}',Q_a'\} $ recognizing the language $ \mathtt{L} $. The generator set for $ \mathbf{F}_{n+1} $ is $ X' $, where $ X'=X\cup \# $. 
	
	Let us analyze the construction in more detail.
	
	\begin{itemize}
		\item $ Q'=Q_{-} \cup Q_+ $ where $ Q_{-}=\{q_-|q \in Q\} $ and  $ Q_{+}=\{q_+|q \in Q\} $ 
		\item $ q_0' $=$ q_+ $ where $ q=q_0 $. 
		\item $ Q_a'=\{q_-|q\in Q_a \} $.
		\item$  \delta'(p_+,\sigma)=(q_+,x\#) $ if $ \delta(p,\sigma)=(q,x\#) $ where $ x $ is a positive generator for all $ \sigma \in \Sigma $.
		\item$ \delta'(p_-,\sigma)=(q_+,x'\#) $ if $ \delta(p,\sigma)=(q,x'\#) $ where $ x' $ is a negative generator for all $ \sigma \in \Sigma $.
		\item $ \delta'(p_+,\sigma)=(q_+,e) $ if $ \delta(p,\sigma)=(q,e) $ for all $ \sigma \in \Sigma $.
		\item $ \delta'(q_+,\epsilon)=(q_-,e) $ for each $ q\in Q $.
		\item $ \delta'(q_-,\epsilon)=(q_-,\#^{-1}) $ for each $ q\in Q $.
	\end{itemize}
	
	We  will prove that $ \mathcal{G} $ actually runs in linear time. There are two transitions where the automaton is allowed to move without consuming any input symbols.
	
	For each state $q\in Q $, there are two states $ q_+ $ and $ q_- $ in $ \mathcal{G}  $ which are connected with an edge labeled $ (\epsilon, e) $. These transitions do not change the register value, and cannot contribute more than half of the runtime of the machine, since at least one input symbol has to be consumed between any two executions of such transitions.
	
	$ \epsilon $-loops exist in the machine $ \mathcal{G} $ for each state $ q_- $ where the loop is labeled by $ (\epsilon, \#^{-1}) $. Although this looks worrisome at first for the purpose of bounding the runtime, the number of times these loops are traversed is actually bounded, as the following argument suggests. Suppose that the register is multiplied with $ l_1 $, $ l_2 $, $ \cdots $, $ l_m $ while reading some input string $ w$ of length $ n $, resulting in the register value $l=l_1l_2\cdots l_m(\#^{-1})^k$, where $ k \in \mathbb{N} $, at the end of the computation. If  $w $ is accepted by the machine, $ l $ should satisfy the following, as well as being equal to the identity element:
	\[
	l_i = \Biggl\{\begin{array}{lr}
	(\#^{-1})^px_i\# \mbox{ for some } p \in \mathbb{N}, & \mbox{if $ x_i $ is a negative generator}\\
	x_i\# , & \mbox{if $ x_i $ is a positive generator}\\
	\end{array}
	\]

	This is called a \textit{permissible padding} in \cite{Ka09}. By looking at the transition function of $ \mathcal{G} $, one can see that the register is multiplied by a $ \# $ only when an input symbol is consumed. Hence, the number of $ \# $'s that occur in $ l $ is less than or equal to the length of the string. The register is multiplied with $ \#^{-1} $ without consuming any input symbol. In order for the $ \# $'s and $ \#^{-1} $'s to cancel each other, they should be equal in number. Therefore, it can be concluded that the $ \epsilon $-loops are traversed at most $ n $ times.

	We can conclude that any context-free language can be recognized by a weakly linear-time bounded free group automaton. Since $ \mathbf{F}_2 $ contains every free group of countable rank, the proof is complete.
\end{proof}	

We state the following theorem, which is the linear-time equivalent of Fact \ref{fact: finite} \cite{Co05}.

\begin{thrm}\label{thm: rtindex}
	Suppose $ G $ is a finitely generated group and $ H $ is a subgroup of finite index. Then $ \mathfrak{L}(G)_{O(n)}^w = \mathfrak{L}(H)_{O(n)}^w	 $.
\end{thrm}
\begin{proof}
	We know that the statement is true in general when there is no time bound by \cite{Co05}. The proof in \cite{Co05} still works when all automata in the constructions are required to work in linear time.
\end{proof}

Now we can show that Theorem \ref{theorem:gl} also holds for linear-time bounded group automaton.

\begin{thrm}\label{thm: cfrt}
	$  \mathsf{CF}= \mathfrak{L}(\mathbf{F}_2)_{O(n)}^w= \mathfrak{L}(SL(2,\mathbb{Z}))_{O(n)}^w=\mathfrak{L}(GL(2,\mathbb{Z}))_{O(n)}^w  $.
\end{thrm}
\begin{proof}
	The proof is identical with the proof of Theorem \ref{theorem:gl} by using Theorem \ref{thm: rtindex}.
\end{proof}

By using the results proven in Subsection \ref{section: mainthm}, we can demonstrate the language recognition power of weakly linear-time bounded $ \mathbf{H} $-automata.

\begin{thrm}
	$ \mathfrak{L}(\textbf{H})_{O(n)}^w \subsetneq  \mathfrak{L}(SL(3,\mathbb{Z}))_{O(n)}^w $.
\end{thrm}		

\begin{proof}
	$ \mathfrak{L}(\mathbf{H})^w_{O(n)} \subseteq \mathfrak{L}(SL(3,\mathbb{Z}))_{O(n)}^w $ since $ \mathbf{H} $ is a subgroup of $ SL(3,\mathbb{Z}) $.  
	Since the Heisenberg group has polynomial growth function \cite{Ha00}, there exists a context-free language which can not be recognized by any $ \textbf{H} $-automaton in polynomial time by Theorem  \ref{thm: polycf}. 	Since $ \mathsf{CF}=\mathfrak{L}(SL(2,\mathbb{Z}))_{O(n)}^w$ by Theorem \ref{thm: cfrt}, the result follows. 
\end{proof}

\begin{thrm}
i.  For $ k\geq 5 $, $ \mathfrak{L}(\mathbf{H})^w_{O(n)} $ and $ \mathfrak{L}(\mathbb{Z}^k)^w_{O(n)}$  are incomparable.
\\
ii. $ \mathfrak{L}(\mathbf{H})^w_{O(n)} $  and $ \mathsf{CF} $ are incomparable.

\end{thrm}
\begin{proof}
\textit{ i.} 	In \cite{Re10}, a weakly linear-time bounded $ \mathbf{H} $-automaton which recognizes the language $\mathtt{MULT}=\{x^py^qz^{pq} | p,q\geq 0\}$ is constructed. The language $\mathtt{MULT}$ can not be recognized by any $ \mathbb{Z}^k $-automaton, since any bounded language in $ \mathfrak{L}(\mathbb{Q}^+) $ is semilinear by \cite{ISK76}. 
		
		In \cite{GS98}, it is implicitly proven there exists a uniformly $ n $-dissimilar set of size $ \Theta(n^k) $ for the language $ \mathtt{L}_k=\{0^{a_1}10^{a_2}1\dots 0^{a_k} 10^{a_1}10^{a_2}1\dots 0^{a_k} 1\}  $ for all integers $ k $. For $k=5 $, there exists a uniformly $ n $-dissimilar set of size $ \Theta(n^5) $ for the language $ \mathtt{L}_5$ and $ U_{\mathtt{L}_5}(n) \geq n^5 $. Since $ g_{\mathbf{H}}(n)$ is a polynomial of order 4 \cite{Ha00} and $ t(n)=O(n) $, $ g_{\mathbf{H}}(t(n)) \in o(U_{\mathtt{L}_5}(n) )$. By Theorem \ref{thm: growth2}, we conclude the result.
\\
		
\noindent $ ii. $ The language $\mathtt{MULT}=\{x^py^qz^{pq} | p,q\geq 0\}$ is not a context-free language. Since $ \mathbf{H} $ has a polynomial growth function \cite{Ha00}, there exists a context-free language which can not be recognized by any $ \mathbf{H} $-automaton in polynomial-time by Theorem \ref{thm: polycf}.

\end{proof}

Let us note that $ \mathtt{L}_5 $ can be recognized by a $ \mathbb{Z}^5 $-automaton in real time. The existence of the languages $ \mathtt{L}_k $ can be used to prove the linear-time nondeterministic counter hierarchy, with the help of Theorem \ref{thm: growth2}.

\begin{thrm}
	$ \mathfrak{L}(\mathbb{Z}^k )^w_{O(n)} \subsetneq \mathfrak{L}(\mathbb{Z}^{k+1} )^w_{O(n)} 	 $ for $ k\geq 1 $.
\end{thrm}
\begin{proof}
	The language $ \mathtt{L}_{k+1}=\{0^{a_1}10^{a_2}1\dots 0^{a_{k+1}} 10^{a_1}10^{a_2}1\dots 0^{a_{k+1}} 1\}  $	can be recognized by a $ \mathbb{Z}^{k+1}  $-automaton in real time. While scanning the first $ k+1 $ segments of $ 0 $'s, the $ i $'th counter is increased for each scanned $ 0 $ as $ 0^{a_i} $ is read. In the remainder of the computation, the $ i $'th counter is decreased for each scanned $ 0 $ when $ 0^{a_i} $ is read.
	
	There exists a uniformly $ n $-dissimilar set of size $ \Theta(n^{k+1}) $ for the language $ \mathtt{L}_{k+1} $, so $ U_{\mathtt{L}_{k+1}}(n) \geq n^{k+1} $. Since $ t(n)= O(n) $ and $ g_{\mathbb{Z}^{k}}(n) = n^k$ \cite{Gi90}, $ g_{\mathbb{Z}^{k}}(t(n)) \in o(U_{\mathtt{L}_5}(n) )$. We conclude by Theorem \ref{thm: growth2}.
\end{proof}

A celebrated result of the field of computational complexity, the nondeterministic time hierarchy theorem, will enable us to demonstrate that the computational power $ \mathbf{F}_2 \times \mathbf{F}_2$-automata is dependent on the time allotted for their execution. 

\begin{fact}\label{fact: nth} \textup{\cite{Zak83}}		
	If $g(n)$ is a time-constructible function, and $f(n+1) = o(g(n))$, then there exists a language which cannot be recognized by any nondeterministic Turing machine in time $f(n)$, but can be recognized by a nondeterministic Turing machine in time $g(n)$.
\end{fact}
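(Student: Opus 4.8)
The plan is to prove this nondeterministic time hierarchy statement by \emph{lazy (delayed) diagonalization}, the technique of Žák. A direct diagonalization fails here because negating the answer of a nondeterministic machine within a comparable time bound is not known to be possible; the remedy is to negate the target machine only once inside each of a sequence of rapidly growing \emph{blocks} of input lengths, while copying the verdict forward across the rest of the block.

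First I would fix an effective enumeration $M_1,M_2,\dots$ of nondeterministic Turing machines in which every machine occurs infinitely often, together with a fixed nondeterministic universal machine that simulates $t$ steps of $M_i$ in $c_i\cdot t$ steps, where $c_i$ depends only on $i$. Working over unary inputs $0^n$, I would define a sequence $m_1<m_2<\cdots$ with $m_1=1$ and $m_{i+1}$ chosen so large that a \emph{deterministic} simulation of $M_i$ on $0^{\,m_i+1}$ for $f(m_i+1)$ steps, which costs at most $2^{O(f(m_i+1))}$ steps, is bounded above by $g(m_{i+1})$; this is possible because $g$ is unbounded. Set $I_i=\{m_i+1,\dots,m_{i+1}\}$. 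The diagonal machine $D$, on input $0^n$ with $n\in I_i$, behaves as follows: if $n<m_{i+1}$ it nondeterministically simulates $M_i$ on $0^{\,n+1}$ for $f(n+1)$ steps and accepts iff that simulation accepts; if $n=m_{i+1}$ it deterministically decides whether $M_i$ accepts $0^{\,m_i+1}$ within $f(m_i+1)$ steps and accepts iff it does \emph{not}. In both cases $D$ runs a clock built from the time-constructibility of $g$, so that it always halts within $g(n)$ steps: the forward-copy branch costs $O\!\big(c_i\, f(n+1)\big)=o(g(n))$ by the hypothesis $f(n+1)=o(g(n))$ (the fixed constant $c_i$ being harmless), and the block-end branch fits by the choice of $m_{i+1}$. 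Hence $L(D)$ is recognized by a nondeterministic Turing machine in time $g(n)$.

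Finally I would derive the contradiction. Suppose some nondeterministic machine $M$ running in time $f$ satisfies $L(M)=L(D)$. Since $M$ occurs infinitely often in the enumeration, $M=M_i$ for some index $i$ whose block $I_i$ lies far enough out that all the simulations above are faithful. For every $n\in I_i$ with $n<m_{i+1}$ we then have $0^n\in L(D)\iff 0^{\,n+1}\in L(M_i)=L(D)$, so by induction along the block $0^{\,m_i+1}\in L(D)\iff 0^{\,m_{i+1}}\in L(D)$. But the block-end rule gives $0^{\,m_{i+1}}\in L(D)\iff 0^{\,m_i+1}\notin L(M_i)=L(D)$, which is a contradiction. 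Thus $L(D)$ cannot be recognized by any nondeterministic machine in time $f(n)$, although it is recognized in time $g(n)$, which is the claimed separation.

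The main obstacle is exactly the one the construction is engineered to defeat: choosing the block endpoints $m_{i+1}$ so that a single expensive (exponential-overhead) deterministic negation of $M_i$ still fits inside $g(m_{i+1})$, while simultaneously keeping the cheap forward-copy step inside $g(n)$ for every $n$ in the block — this is precisely where the time-constructibility of $g$ and the asymptotic gap $f(n+1)=o(g(n))$ are both used. A secondary point to verify is that the simulation constant $c_i$ does not accumulate along the diagonalization: it is pinned down as soon as one particular copy of $M_i$ is singled out of the enumeration, and is then absorbed by the little-$o$.
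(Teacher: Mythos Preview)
The paper does not prove this statement at all; it is stated as a \emph{Fact} with a citation to \v{Z}\'{a}k's 1983 paper and used as a black box. Your proposal is a faithful outline of \v{Z}\'{a}k's own lazy-diagonalization argument---the very proof the citation points to---so in that sense you have reconstructed the source rather than diverged from it. There is nothing in the paper to compare against.

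One small technical point worth tightening in your sketch: you write that $D$ ``simulates $M_i$ on $0^{\,n+1}$ for $f(n+1)$ steps,'' but $f$ is not assumed time-constructible, so $D$ cannot count out $f(n+1)$ steps. The actual construction runs the simulation under a $g(n)$ clock (which you do mention later); the hypothesis $f(n+1)=o(g(n))$ then guarantees that for all sufficiently large $n$ in the block the $g(n)$-clocked simulation of $M_i$ on $0^{\,n+1}$ runs to completion. The same issue arises at the block endpoint, where the choice of $m_{i+1}$ must be made without reference to $f$; one standard fix is to let the block structure itself be determined by $g$ (or by an auxiliary rapidly growing computable sequence) rather than by $f(m_i+1)$. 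None of this affects the correctness of the overall strategy, which is sound.
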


Assume that any recursively enumerable language can be recognized by some linear-time $ \mathbf{F}_2 \times \mathbf{F}_2$-automaton. One can easily build a nondeterministic Turing machine that simulates such a  $ \mathbf{F}_2 \times \mathbf{F}_2$-automaton with only a polynomial slowdown. But this would mean that any recursively enumerable language can be recognized by  some nondeterministic TM in polynomial time, contradicting Fact \ref{fact: nth}, which implies that there exist languages which can only be recognized by nondeterministic Turing machines which run in at least exponential time. We have proven the following theorem.

\begin{thrm}
	$ 	\mathfrak{L}(\mathbf{F}_2 \times \mathbf{F}_2)_{O(n)}^w  \subsetneq \mathsf{RE} $.
\end{thrm}

Using the ability of Turing machines to simulate any finite automaton over a computable matrix group, the statement of the above theorem can be extended as follows.

\begin{thrm}
	$ \mathfrak{L}(G)_{O(n)}^w  \subsetneq \mathsf{RE} $ for any matrix group $ G $ whose matrix entries are computable numbers.
\end{thrm}
\begin{proof}
In Theorem \ref{thm: RE}, we have mentioned that Turing machines can simulate any finite automaton over a computable matrix group. By the nondeterministic time hierarchy theorem, it can be shown that there exist some languages which can not be recognized by any finite automata over matrix groups in linear time. 
\end{proof}

\begin{thrm}
	$ 	\mathfrak{L}(\mathbf{F}_2)_{O(n)}^w \subsetneq	\mathfrak{L}(\mathbf{F}_2 \times \mathbf{F}_2)_{O(n)}^w  $.
\end{thrm}
\begin{proof}
	It is obvious that an $ \mathbf{F}_2 $-automaton can be simulated by an $ \mathbf{F}_2 \times \mathbf{F}_2 $-automaton. $ \mathfrak{L}(\mathbf{F}_2)_{O(n)}^w = \mathsf{CF} $ by Theorem \ref{thm: cfrt}. The inclusion is proper since the non-context-free language $ \mathtt{L}=\{a^nb^nc^n|n\geq 0\} $ can be recognized by an $ \mathbf{F}_2 \times \mathbf{F}_2 $-automaton in real time by using the two registers as two counters. 
\end{proof}

In the rest of the section, the linear-time counterparts of the relationships in Figure \ref{fig: diagram} will be stated.

\begin{thrm}
	\begin{enumerate}[i.]
		\item $ \mathfrak{L}(\mathbb{Q}^+)^w_{O(n)} \subsetneq \mathfrak{L}(SL(2,\mathbb{Q}))^w_{O(n)} $.
		\item $ \mathfrak{L}(\mathbb{Z})^w_{O(n)} \subsetneq \mathfrak{L}(BS(1,2))^w_{O(n)} \nsubseteq \mathsf{CF}$.
		\item $ \mathfrak{L}(SL(2,\mathbb{Z}))^w_{O(n)} \subsetneq  \mathfrak{L}(SL(3,\mathbb{Z}))^w_{O(n)}$.
		\item $\mathfrak{L}(\mathbb{Z}^2)^w_{O(n)} \subsetneq \mathfrak{L}(\mathbf{H})^w_{O(n)}$.
		\item 	$\mathsf{CF}$ and $\mathfrak{L}(\mathbb{Z}^k)^w_{O(n)}$ are incomparable for all $ k \geq 2 $.
		\item $ \mathfrak{L}(SL(3,\mathbb{Z}))^w_{O(n)} = \mathfrak{L}(GL(3,\mathbb{Z}))^w_{O(n)}$.
		\item 	$ \mathsf{REG} = \mathfrak{L}(\mathbf{F}_0)^w_{O(n)} \subsetneq
		\mathfrak{L}(\mathbf{F}_1)^w_{O(n)} = \mathfrak{L}(\mathbb{Z})^w_{O(n)} \subsetneq
		\mathfrak{L}(\mathbf{F}_2)^w_{O(n)} $.
		
	\end{enumerate}
\end{thrm}
\begin{proof}
	
	$(i.,ii.,iii.,iv.) $ Analogous results where no time bound was imposed on the machines were proven in Theorems \ref{thm: qsl2q}, \ref{theorem: upow}, \ref{thm: sl2z3z}, and \ref{theorem: Z2H}, respectively. The group automata recognizing 
	the witness languages $\mathtt{L}=\{a^{2^{2n+1}} | n \geq 0 \}$,   $\mathtt{UPOW}=\{a^{2^n}|n\geq 0\}$ and $\mathtt{MULT}=\{x^py^qz^{pq} | p,q\geq 0\}$ operate in weakly linear time in all cases. 
	
	\textit{v.} The equivalent result for the general case is given in Fact \ref{fact: cfzn}. The non-context-free language $ \mathtt{L}'=\{a^nb^nc^n|n\geq 0\} $ can be recognized by a $ \mathbb{Z}^2 $-automaton in real time. 
	
	\textit{vi.} The equivalent result for the general case is given in Theorem \ref{thm: sl3zgl3z}. The result follows by Theorem \ref{thm: rtindex}.
	
	\textit{vii.} The equivalent result for the general case is given in Fact \ref{fact: reg}. $ \mathbf{F}_0 $ is the trivial group, and any regular language can be recognized by a deterministic finite automaton, which can be seen as finite automaton over $ \mathbf{F}_0 $, in real time. Since $ \mathbf{F}_1 $ is isomorophic to $ \mathbb{Z} $, the equality is obvious. Since the nonregular language $ \mathtt{L}=\{a^nb^n|n\geq 0\}$ can be recognized by a $ \mathbb{Z} $-automaton in real time, the proper inclusion follows. Lastly, since $ \mathfrak{L}(\mathbf{F}_2)^w_{O(n)}  $ is equivalent to $ \mathsf{CF} $ by Theorem \ref{thm: cfrt}, the last proper inclusion is still valid.
\end{proof}

The results are summarized in Figure \ref{fig:diagram-time}. 

\begin{figure}[h!]
	\centering
	\includegraphics[width=1\linewidth]{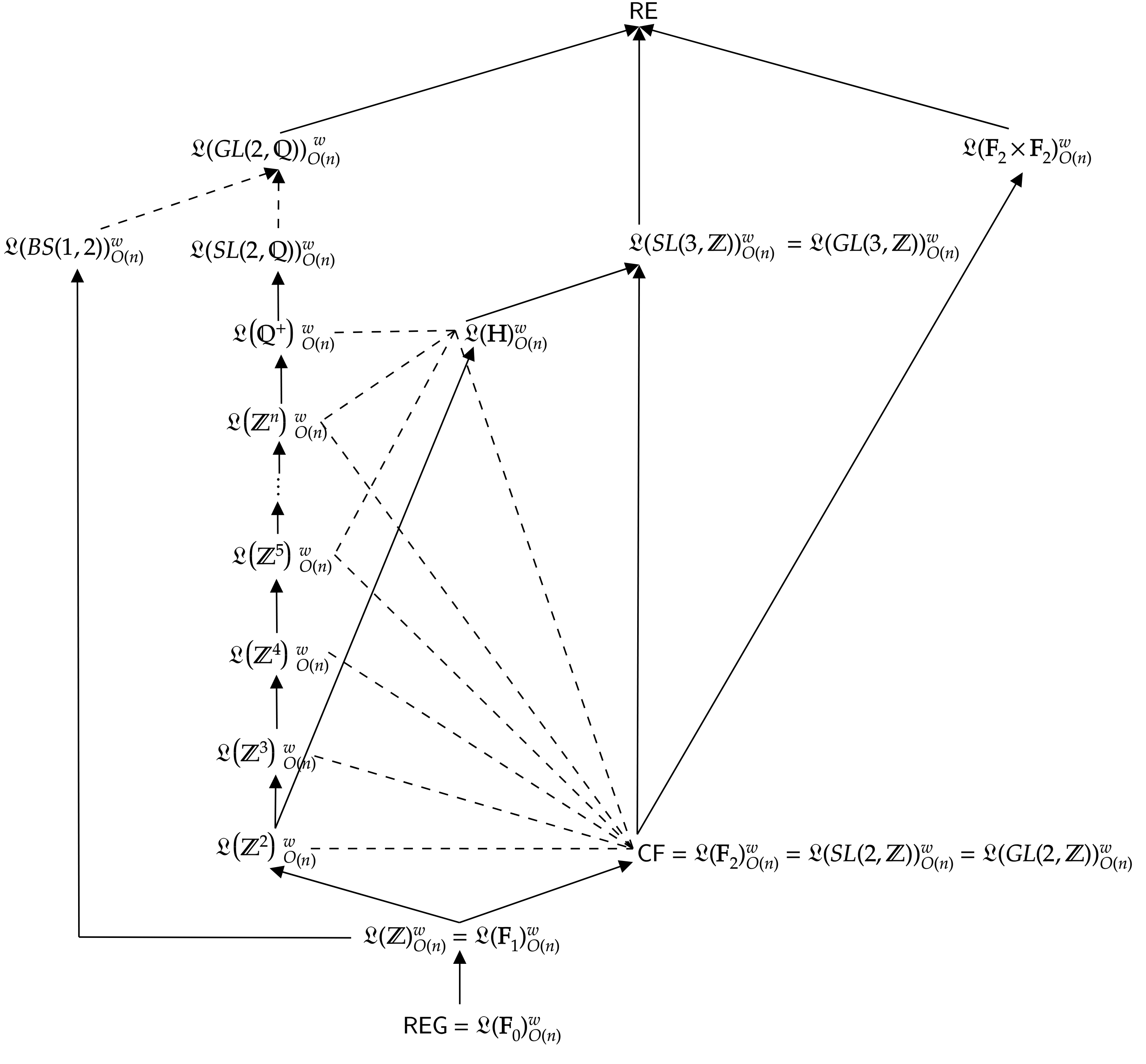}
	\caption{Language classes recognized by weakly linear-time bounded group automata }
	\label{fig:diagram-time}
\end{figure}

\section{Open questions}\label{Section: open}

Does there exist an $ SL(3,\mathbb{Z}) $-automaton recognizing $ W(\mathbb{Z}^3) $?
\footnote{Corollary 2 of \cite{CEO06} states that the word problem of a finitely generated 
	Abelian group $H$ is recognized by a $G$-automaton if and only if $H$ has a finite index 
	subgroup isomorphic to a subgroup of $G$. That corollary could be used to give an 
	affirmative answer to this open question. Unfortunately, the corollary is wrong: 
	Let $ H $ be an Abelian group and let $ G=\mathbf{F}_2 \times \mathbf{F}_2 $. 
	$ \mathfrak{L}(\mathbf{F}_2 \times \mathbf{F}_2 )$ contains the word problem of any 
	finitely generated Abelian group. Since $ \mathbf{F}_2 \times \mathbf{F}_2 $ is finitely 
	generated, any finite index subgroup of $ \mathbf{F}_2 \times \mathbf{F}_2 $ is also 
	finitely generated. Any finite index subgroup of $ \mathbf{F}_2 \times \mathbf{F}_2 $ is 
	either free or has a subgroup of finite index that is a direct product of free groups 
	\cite{BR84}. Any subgroup of an Abelian group is again Abelian. Hence, it is not 
	possible that $ G$ has a finite index subgroup isomorphic to a subgroup of $ H $.}

Can we prove a stronger version of Theorem \ref{thm: polycf}, which is independent 
of the time component? For instance, for the case of $ \mathbf{F}_2 $, is it true that 
$ W(\mathbf{F}_2) \notin \mathfrak{L}(\mathbf{H})$ in general?

Can we describe the necessary properties of a group $ G $ so that 
$ \mathfrak{L}(G) $ contains $ W(\textbf{F}_2) $?

Little is known about $ BS(1,2) $-automata. Does $  \mathfrak{L}(BS(1,2)) $ 
contain every context-free language?

Which, if any, of the subset relationships in Figure \ref{fig: diagram} are 
proper inclusions? 

Can we add other classes above RE in Figure 3 by examining groups on
matrices with uncomputable entries?

Theorem \ref{thm: growth2} uses the definition of uniform $ n $-dissimilarity requiring that $ g_G(t(n)) $  $\in o( U_\mathtt{L}(n))  $. Would the theorem be still true if we replace $ U_\mathtt{L}(n) $ by $ A_\mathtt{L}(n) $ ? The gap between  $ U_\mathtt{L}(n) $ and $ A_\mathtt{L}(n) $ might be large as mentioned in \cite{GS98}. Consider the language $ \mathtt{L}=\{a^ib^j|i\neq j\} $. It is stated in \cite{GS98} that a set of uniformly $ n $-dissimilar strings for $ \mathtt{L} $ can not contain more than two strings. However, $ A_\mathtt{L}(n) \notin O(1) $, since $ \mathtt{L} $ is not a regular language.

Can real-time $ \mathbf{F}_2 $-automata recognize every context-free language?

\bibliographystyle{plain}
\bibliography{references}

\begin{thebibliography}{10}

\bibitem{BR84}
Gilbert Baumslag and James~E Roseblade.
\newblock Subgroups of direct products of free groups.
\newblock {\em Journal of the London Mathematical Society}, 2(1):44--52, 1984.

\bibitem{BO08}
Nathanial~Patrick Brown and Narutaka Ozawa.
\newblock {\em C*-algebras and finite-dimensional approximations}, volume~88.
\newblock American Mathematical Soc., 2008.

\bibitem{CEO06}
Sean Cleary, Murray Elder, and Gretchen Ostheimer.
\newblock The word problem distinguishes counter languages.
\newblock {\em arXiv preprint math/0606415}, 2006.

\bibitem{Co05}
Jon~M. Corson.
\newblock Extended finite automata and word problems.
\newblock {\em International Journal of Algebra and Computation},
  15(03):455--466, 2005.

\bibitem{DM00}
J\"{u}rgen Dassow and Victor Mitrana.
\newblock Finite automata over free groups.
\newblock {\em International Journal of Algebra and Computation},
  10(06):725--737, 2000.

\bibitem{EKO08}
Murray Elder, Mark Kambites, and Gretchen Ostheimer.
\newblock On groups and counter automata.
\newblock {\em International Journal of Algebra and Computation},
  18(08):1345--1364, 2008.

\bibitem{EO04}
Gillian~Z. Elston and Gretchen Ostheimer.
\newblock On groups whose word problem is solved by a~counter automaton.
\newblock {\em Theoretical Computer Science}, 320(2–3):175 -- 185, 2004.

\bibitem{FMR67}
Patrick~C. Fischer, Albert~R. Meyer, and Arnold~L. Rosenberg.
\newblock Real time counter machines.
\newblock In {\em Proceedings of the 8th Annual Symposium on Switching and
  Automata Theory (SWAT 1967)}, FOCS '67, pages 148--154, 1967.

\bibitem{Fr03}
J.B. Fraleigh and V.J. Katz.
\newblock {\em A first course in abstract algebra}.
\newblock Addison-Wesley world student series. Addison-Wesley, 2003.

\bibitem{GS98}
Ian Glaister and Jeffrey Shallit.
\newblock Automaticity iii: Polynomial automaticity and context-free languages.
\newblock {\em Computational Complexity}, 7(4):371--387, 1998.

\bibitem{Gr78}
S.~A. Greibach.
\newblock Remarks on blind and partially blind one-way multicounter machines.
\newblock {\em Theoretical Computer Science}, 7:311--324, 1978.

\bibitem{Gi90}
Rostislav~I Grigorchuk.
\newblock On growth in group theory.
\newblock In {\em Proceedings of the International Congress of Mathematicians},
  volume~1, pages 325--338, 1990.

\bibitem{ISK76}
Oscar~H. Ibarra, Sartaj~K. Sahni, and Chul~E. Kim.
\newblock Finite automata with multiplication.
\newblock {\em Theoretical Computer Science}, 2(3):271 -- 294, 1976.

\bibitem{Ka09}
Mark Kambites.
\newblock Formal languages and groups as memory.
\newblock {\em Communications in Algebra}, 37(1):193--208, 2009.

\bibitem{KM79}
Mikhail~I. Kargapolov and Ju.~I. Merzljakov.
\newblock {\em Fundamentals of the Theory of Groups}.
\newblock Springer-Verlag, 1979.

\bibitem{Ha00}
Pierre~De La~Harpe.
\newblock {\em Topics in geometric group theory}.
\newblock The University Of Chicago Press, Chicago, 2000.

\bibitem{LS77}
Roger~C. Lyndon and Paul~E. Schupp.
\newblock {\em Combinatorial Group Theory}.
\newblock Springer-Verlag, 1977.

\bibitem{MS97}
Victor Mitrana and Ralf Stiebe.
\newblock The accepting power of finite automata over groups.
\newblock In {\em New Trends in Formal Languages}, pages 39--48.
  Springer-Verlag, 1997.

\bibitem{Re10}
Elaine Render.
\newblock {\em Rational monoid and semigroup automata}.
\newblock PhD thesis, University of Manchester, 2010.

\bibitem{Zak83}
Stanis\l{}aw \.{Z}ak.
\newblock A {T}uring machine time hierarchy.
\newblock {\em Theoretical Computer Science}, 26(3):327 -- 333, 1983.

\bibitem{Ze13}
Georg Zetzsche.
\newblock Silent transitions in automata with storage.
\newblock In {\em International Colloquium on Automata, Languages, and
  Programming}, pages 434--445. Springer Berlin Heidelberg, 2013.

\end{thebibliography}

\end{document}